\newtheorem{theorem}{Theorem}[section]
\newtheorem{lemma}[theorem]{Lemma}
\newtheorem{corollary}[theorem]{Corollary}
\newtheorem{proposition}[theorem]{Proposition}
\newtheorem{remark}[theorem]{Remark}
\begin{document}

\newcommand{\la}{\triangle}
\newcommand{\bs}{\backslash}
\renewcommand{\d}{\delta}
\newcommand{\D}{\Delta}
\newcommand{\ra}{\rightarrow}
\newcommand{\p}{\partial}
\newcommand{\f}{\frac}
\newcommand{\g}{\gamma}
\newcommand{\G}{\Gamma}
\renewcommand{\l}{\lambda}
\renewcommand{\L}{\Lambda}
\newcommand{\be}{\begin{equation}}
\renewcommand{\ra}{\rightarrow}
\newcommand{\ee}{\end{equation}}
\newcommand{\bea}{\begin{eqnarray}}
\newcommand{\eea}{\end{eqnarray}}
\newcommand{\bna}{\begin{eqnarray*}}
\newcommand{\ena}{\end{eqnarray*}}
\renewcommand{\o}{\omega}
\renewcommand{\O}{\Omega}
\renewcommand{\le}{\left}
\newcommand{\ri}{\right}
\newcommand{\s}{\sigma}
\newcommand{\slb}{\bar{\sum_{\lambda}}}
\newcommand{\ve}{\varepsilon}
\newcommand{\vp}{\varphi}
\openup 1.0\jot
\renewcommand{\theequation}{\arabic{section}.\arabic{equation}}

\title{\bf   Von Neumann entropy and majorization}
\author{Yuan Li$^{a,}$\thanks{E-mail address: liyuan0401@yahoo.com.cn} {\ and}
Paul Busch$^{b,}$\thanks{E-mail address: paul.busch@york.ac.uk}\\
 {\small\em a. College of Mathematics and Information Science, Shaanxi  Normal University,}\\
   {\small\em Xi'an, 710062, People's Republic of China}\\
{\small\em b.  Department of Mathematics, University of York,
York, YO10  5DD,   United Kingdom}
}
\date{}
\maketitle

 \begin{abstract}
 \noindent We consider the properties of the Shannon entropy for two probability 
distributions which stand in the relationship of majorization. Then we give a generalization of a theorem due to 
Uhlmann, extending it to infinite dimensional Hilbert spaces. Finally we show that for any quantum channel
$\Phi$, one has  $S(\Phi(\rho))=S(\rho)$ for all quantum states $\rho$ if and only if there exists an isometric operator 
$V$ such that $\Phi(\rho)=V\rho V^*$.\\

\noindent {\bf  Keywords}:  Von Neumann entropy, majorization, quantum operation.\\
 { AMS Classification}: 47L05, 47L90, 81R10  \\
\end{abstract}

 \section{Introduction}

In this paper we study aspects of the relation of majorization for classical probability distributions and for quantum
mechanical density operators and connections with the action of quantum channels, utilising relevant properties of
the Shannon entropy and von Neumann entropy. We obtain extensions of classic results from finite to infinite dimensional
Hilbert spaces.

Let $\mathcal{B(H)}$ be the von Neumann algebra of all bounded linear operators on a separable Hilbert 
space $\mathcal{H}$ over $\mathbb{C}$ and $S\mathcal{(H)}$ be the set of all {\em density operators} on 
$\mathcal{H}$. That is, $\rho\in S\mathcal{(H)}$ if and only if $\rho\geq0$ and ${\rm tr}(\rho)=1$. The elements of 
$S\mathcal{(H)}$ are taken to represent quantum states in quantum physics, while the selfadjoint elements
of $\mathcal{B(H)}$ represent (bounded) observables. The set $S\mathcal{(H)}$ spans the Banach space  
$\mathcal{T(H)}$ of all trace class operators on $\mathcal{H}$.  

We denote by $\mathcal{B(H,K)}$ the space of all bounded linear operators from $\mathcal{H}$ into another 
Hilbert space $\mathcal{K}$. An operator $V\in\mathcal{B(H,K)}$ is called an isometry if $V^*V=I_{\mathcal{H}}$.
In this case, $VV^*\in {\mathcal{B(K)}}$ is an orthogonal projection. For $x,y\in\mathcal{H}$, $x\otimes y$ denotes 
the (linear, rank-1) operator $z\mapsto \langle z, y\rangle x$ $(z\in\mathcal{H})$. Associated with each $\rho\in
S\mathcal{(H)}$ is the sequence $\lambda(\rho)\equiv \bigl(\lambda_1(\rho),\lambda_2(\rho),\cdots\lambda_n(\rho)\cdots\bigr)$ 
of eigenvalues of $\rho$, arranged in non-increasing order. Thus $\lambda(\rho)\in c_0^*$, where $c_0^*$ is
the positive cone of sequences decreasing monotonically to $0$, as defined in \cite{KW10}. 

Let $l^{\infty}(\mathbb{R})$ and 
$ l_1^1(\mathbb{R}^+)$ denote  the sets of all bounded real sequences and of all summable non-negative real 
sequences which have sum $1$, respectively. For a vector $r \in l^{\infty}(\mathbb{R})$, we introduce 
$r^{\downarrow}=(r_{1}^{\downarrow},\ldots r_{n}^{\downarrow}\cdots)$ as the vector whose elements are the
elements of $r$ re-ordered into non-increasing order.  Adopting the definition of majorization given in 
\cite{Ante07},\cite{KW10}, for $r,s\in {c_0^*}$, we say that $r$ is majorized by $s$, written as $r\prec s$, 
if 
\begin{displaymath}
\sum_{i=1}^{k}r_i^{\downarrow}\leq\sum_{i=1}^{k}s_i^{\downarrow},\hbox { for  } k=1,2,\cdots
  \qquad \hbox {and }
 \sum_{i=1}^{\infty}r_i^{\downarrow}=\sum_{i=1}^{\infty}s_i^{\downarrow}.
 \end{displaymath}
 
Recently, 
the infinite dimensional Schur-Horn theorem and infinite majorization have received much attention. 
In \cite{Neumann99}, A.\ Neumann has given properties of infinite majorization in $l^{\infty}(\mathbb{R})$; 
V.\ Kaftal and G.\ Weiss \cite{KW10} have obtained interesting results for infinite majorization in $c_0^*$.  
Arveson and Kadison \cite{AK06} presented some other characterizations by different methods.

Some results relevant for our purposes are the following: if $r,s\in l_1^1(\mathbb{R}^+)$, then
$$
r\prec s\Longleftrightarrow r=Qs,\hbox{ with } Q_{ij}=|U_{ij}|^2 \hbox{ for some unitary U } \ \hbox{\cite[Theorem 1]{GM64}},
$$
and 
$$
r\prec s\Longleftrightarrow r=Qs,\hbox{ for some orthostochastic matrix Q}\ \hbox{\cite[Corollary 6.1]{KW10}}.
$$
Motivated by the above studies, we first consider the properties of the Shannon entropy for two elements in 
$l_1^1( \mathbb{R}^+)$ that satisfy majorization. Then we extend and study those properties for two density operators 
$\rho,\sigma\in S\mathcal{(H)}$. Following the definition given for finite dimensional spaces (see \cite{AU82}), we denote
$\rho\prec \sigma$ for two operators $\rho,\sigma\in S\mathcal{(H)}$ if $\lambda(\rho)\prec\lambda(\sigma)$.

Let ${\mathcal{M}}_n({\mathcal{B(H)}})$ be the von Neumann algebra of $n\times n$ matrices
whose entries are in ${\mathcal{B(H)}}$,  and let $\Phi:$ ${\mathcal{B(H)}}\to {\mathcal{B(H)}}$  
be a linear map. Then $\Phi$ induces a map 
$\Phi_n:$${\mathcal{M}}_n({\mathcal{B(H)}})\to {\mathcal{M}}_n({\mathcal{B(H)}})$ by the formula
$$
\Phi_n((a_{i,j}))=(\Phi(a_{i,j})) \hbox {   for  }  (a_{i,j})\in{\mathcal{M}}_n({\mathcal{B(H)}}).
$$
If every $\Phi_n$ is a positive map, then $\Phi$ is called completely positive.
$\Phi$ is said to be normal if $\Phi$ is continuous with respect to the ultraweak ($\sigma-$weak) topology.
Normal completely positive contractive maps on ${\mathcal{B(H)}}$ were characterized in a theorem of 
Kraus \cite[Theorem 3.3]{Kraus71}, which says that $\Phi$ is a normal completely positive map if and only if there 
exists a sequence $\{A_i\}_{i=1}^\infty $ in $\mathcal{B(H)}$ such that for all $X\in\mathcal{B(H)}$, 
$$
\Phi(X)=\sum_{i=1}^{\infty}A_iXA_i^*  \hbox {   } \hbox {   } \hbox {   } \hbox {   } \hbox {  with  } \hbox {   } 
\sum_{i=1}^{\infty}A_iA_i^*\leq I.
$$
where the limits are defined in the strong operator topology. The sequence $\{A_i\}_{i=1}^\infty $, which is not unique, 
is also called a family of Kraus operators for $\Phi$. In this case, $\Phi$ has a dual  map $\Phi^\dag$ defined by
$$
\Phi^\dag(X)=\sum_{i=1}^{\infty}A_i^*XA_i \hbox {\quad for } X\in\mathcal{T(H)},
$$
where the sum converges in the trace norm topology.
It is easy to see that one has 
$$
|{\rm tr}[\Phi^\dag(X)Y]|=|{\rm tr}[X\Phi(Y)]|\leq \|\Phi\|\|Y\|\,{\rm tr}(|X|),\quad X\in\mathcal{T(H)},\ 
Y\in{\mathcal{B(H)}},
$$ 
so $\Phi^\dag(X)\in\mathcal{T(H)}$ and $\Phi^\dag$ is well defined on $\mathcal{T(H)}$. In general,
$\Phi^\dag$ cannot be extended from $\mathcal{T(H)}$ into $\mathcal{B(H)}$.
However, if  $\sum_{i=1}^{\infty}A_i^*A_i\leq I$, then $\Phi^\dag$ is well defined as a normal map on $\mathcal{B(H)}$.
A normal completely positive map $\Phi$ which is trace preserving ($\Phi^\dag (I)=I$, corresponding to 
${\rm tr}\bigl(\Phi(X)\bigr)={\rm tr}(X)$
for $X\in\mathcal{T(H)}$)
is called a {\em quantum channel}. If a normal completely positive map satisfies $\Phi(I)\leq I$, then $\Phi$ is
called a quantum operation \cite{Kraus71,Li11}. A quantum operation is unital if $ \Phi(I)=I$, which is equivalent to  
$ \sum_{j}A_{j}A_{j}^*=I$. A quantum operation is bi-stochastic if it is both trace-preserving and unital. 
In particular, $\Phi$ is said to be a mixed unitary operation if $\Phi(X)=\sum_{i=1}^{n}t_iU_iXU_i^*$, 
where $n<\infty$, the $U_i$ are all unitary operators and $t_i>0$, $\sum_{i=1}^{n}t_i=1$.

The von Neumann entropy of a quantum state $\rho$ is defined by the formula 
$$
S(\rho)\equiv-{\rm tr}(\rho\log(\rho)).
$$ 
Here we follow the common practice of taking the logarithm to base two. In classical information theory, 
the Shannon entropy is defined
by $H(p)= -\sum_{i}p_i \log(p_i)$, where $p=(p_1,p_2\cdots p_n\cdots)$ is a probability distribution. 
If $\lambda_{i}$ are the eigenvalues of $\rho$, then the von Neumann entropy can be re-expressed as 
$$
S(\rho)=-\sum_{i=1}^\infty\lambda_{i}\log(\lambda_{i})=H(\lambda(\rho)),
$$ 
where we use $0\log0=0$.

There are extensive recent studies of quantum operations which  preserve the von Neumann entropy and the 
relative entropy of quantum states  \cite{HMPB11,LiWang12,Petz08,ZW11}

Hardy, Littlewood and P$\acute{\rm o}$lya \cite{HLP73} showed for $\xi,\eta\in{ \mathbb{R}}^n$,
$$
\xi\prec\eta\Longleftrightarrow\xi=Q\eta \hbox{ for some doubly stochastic matrix Q}.
$$ 
In the quantum context, Uhlmann proved the following for any pair of density operators 
$\rho,\sigma\in S\mathcal{(H)}$ acting in a finite dimensional Hilbert space $\mathcal{H}$:
$$
\rho\prec \sigma\Longleftrightarrow \rho=\Phi(\sigma), \hbox{ for some mixed unitary quantum operation } \Phi.
$$ 
Uhlmann's theorem can be used to study the role of majorization in quantum mechanics. 

Here we first consider the properties of the Shannon entropies of two probability distributions which obey 
majorization. Then we give a generalization of  Uhlmann's theorem for infinite dimensional Hilbert 
spaces. Finally, we give a characterization of quantum channels that leave the von Neumann entropy  invariant.

\section{Shannon entropy of infinite probability distributions}

The following lemma is a direct corollary of \cite[Theorem 8.0.1]{Nielsen02} 
as the function $f(x)\equiv -x\log(x)$ is a concave function.
\begin{lemma}\label{lem:maj-finite}
Let $a,b\in{\mathbb{R}}^n$ with $a_i,b_i\geq 0$, and $a\prec b$.  Then $H(a)\geq H(b)$.
\end{lemma}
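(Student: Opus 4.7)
The plan is to derive the lemma directly from the Hardy–Littlewood–Pólya characterization stated in the paper: $a\prec b$ if and only if $a=Qb$ for some doubly stochastic $n\times n$ matrix $Q$. Given such a $Q$, I would write out each component as $a_i=\sum_{j=1}^n Q_{ij}b_j$, so that $a_i$ is a convex combination of the entries $b_j$ (since the row sums of $Q$ equal $1$ and $Q_{ij}\geq 0$).

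Next I would apply Jensen's inequality to the concave function $f(x)=-x\log x$ on $[0,\infty)$, which yields
\begin{equation*}
f(a_i) \;=\; f\Bigl(\sum_{j=1}^n Q_{ij}b_j\Bigr) \;\geq\; \sum_{j=1}^n Q_{ij}f(b_j)
\end{equation*}
for each $i$. Summing over $i$ and swapping the order of summation gives
\begin{equation*}
H(a)\;=\;\sum_{i=1}^n f(a_i)\;\geq\;\sum_{j=1}^n\Bigl(\sum_{i=1}^n Q_{ij}\Bigr)f(b_j)\;=\;\sum_{j=1}^n f(b_j)\;=\;H(b),
\end{equation*}
where the penultimate equality uses the fact that the column sums of $Q$ also equal $1$. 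This is exactly the desired inequality.

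The argument is essentially routine and no step poses a real obstacle; the only minor point worth checking is that $f$ is continuous at $0$ with $f(0)=0$ under the convention $0\log 0=0$, so that entries of $a$ or $b$ equal to zero cause no trouble in the Jensen step. In particular, one does not need to assume $a,b\in l_1^1(\mathbb{R}^+)$: only nonnegativity and equal totals (implicit in $a\prec b$) are used, and both sides of the inequality are well defined. The result is thus an immediate consequence of the concavity of $-x\log x$ and the doubly stochastic representation of majorization, matching the author's remark that it follows from \cite[Theorem 8.0.1]{Nielsen02}.
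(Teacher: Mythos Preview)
Your argument is correct and is precisely the standard route the paper alludes to: the cited \cite[Theorem~8.0.1]{Nielsen02} is exactly the statement that $a\prec b$ implies $\sum f(a_i)\ge\sum f(b_i)$ for concave $f$, whose proof is the Hardy--Littlewood--P\'olya doubly stochastic representation combined with Jensen, just as you wrote. So your proposal matches the paper's (implicit) proof.
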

This result extends to the infinite case as follows.
\begin{theorem}\label{thm:maj-infinite}
Let $a,b\in l_1^1( \mathbb{R}^+)$ and $a\prec b$.  Then $H(a)\geq H(b)$.
\end{theorem}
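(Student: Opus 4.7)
The plan is to derive the inequality directly from the orthostochastic representation quoted from \cite{GM64} in the introduction together with the concavity of the entropy function $\eta(x):=-x\log x$. By that theorem, the assumption $a\prec b$ in $l_1^1(\mathbb{R}^+)$ yields a unitary $U$ on $l^2(\mathbb{N})$ such that $a_i=\sum_{j=1}^{\infty}|U_{ij}|^2\,b_j$ for every $i$. Writing $Q_{ij}:=|U_{ij}|^2$, the identities $\|Ue_j\|^2=\|U^{*}e_i\|^2=1$ give both row sums and column sums of $Q$ equal to $1$; in particular each row $(Q_{ij})_{j}$ is a genuine probability distribution on $\mathbb{N}$.

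With this in hand, I would first establish the coordinatewise bound $\eta(a_i)\geq\sum_{j}Q_{ij}\,\eta(b_j)$ by Jensen's inequality applied to the concave function $\eta$. Summing over $i$ and interchanging the two series by Tonelli's theorem (legitimate since all summands are non-negative: $a_i,b_j\in[0,1]$ force $\eta(a_i),\eta(b_j)\geq 0$), and then using the column-sum condition on $Q$, would yield
\[
H(a)=\sum_{i}\eta(a_i)\ \geq\ \sum_{j}\Bigl(\sum_{i}Q_{ij}\Bigr)\eta(b_j)=\sum_{j}\eta(b_j)=H(b).
\]

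The step requiring the most care is the infinite version of Jensen's inequality for the discrete distribution $(Q_{ij})_j$, since $\eta$ is only continuous (not Lipschitz) at the boundary point $0$ and both series in Jensen are a priori infinite. A safe way to handle this is by truncation: for each $N$ form the genuine finite probability vector $(Q_{i1},\dots,Q_{iN},\,r_N)$ with $r_N:=1-\sum_{j\leq N}Q_{ij}$, placed against the values $(b_1,\dots,b_N,0)$. The finite-dimensional Jensen inequality then gives
\[
\eta\Bigl(\sum_{j\leq N}Q_{ij}b_j\Bigr)\ \geq\ \sum_{j\leq N}Q_{ij}\eta(b_j)+r_N\,\eta(0)=\sum_{j\leq N}Q_{ij}\eta(b_j).
\]
Letting $N\to\infty$, the left-hand side converges to $\eta(a_i)$ by continuity of $\eta$ on $[0,1]$, while the right-hand side converges to $\sum_{j}Q_{ij}\eta(b_j)$ by monotone convergence. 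Once the coordinatewise bound is secured in this way, the final interchange of sums is routine, and the displayed chain above completes the proof.
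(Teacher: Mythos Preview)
Your argument is correct. The truncated Jensen step is sound (the right-hand side is in fact uniformly bounded by $\sup_{[0,1]}\eta<\infty$ since $\sum_j Q_{ij}=1$, so there is no risk of an infinite right-hand side against a finite $\eta(a_i)$), and the Tonelli interchange is justified exactly as you say.

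Your route, however, is genuinely different from the paper's. The paper does not invoke the Gohberg--Markus orthostochastic representation at all; instead it proceeds by a two-stage truncation that stays entirely within the finite-dimensional Lemma~\ref{lem:maj-finite}. First it proves the auxiliary inequality $-\sum_i c_i\log c_i\ge -\bigl(\sum_i c_i\bigr)\log\bigl(\sum_i c_i\bigr)$ for any subprobability tail. Then, assuming $b$ has finite support, it collapses the tail of $a$ into a single mass $\sum_{i>N}a_i$ small enough that the resulting finite vector is still majorized by $b$, applies Lemma~\ref{lem:maj-finite}, and uses the auxiliary inequality to pass back to $H(a)$. Finally it reduces the general $b$ to the finite-support case by collapsing the tail of $b$ and letting the cutoff tend to infinity.

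What each approach buys: the paper's argument is self-contained, depending only on the elementary finite-dimensional lemma and requiring no structural theorem about infinite majorization. Your argument is shorter and more conceptual---it is the direct infinite-dimensional analogue of the classical ``doubly stochastic $\Rightarrow$ Schur-concave monotonicity'' proof---but it imports a nontrivial external result. Your method also makes transparent that the conclusion holds for any continuous concave $\eta:[0,1]\to\mathbb{R}_{\ge 0}$ with $\eta(0)=0$, though the paper's truncation scheme can be adapted to show the same.
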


\begin{proof}
Suppose that $c=(c_1,c_2,\cdots c_n,\cdots)$, where $c_i\geq 0$ and
$\sum_{i=1}^\infty c_i\leq 1$. It is clear that for arbitrary $n$,
we have 
\begin{equation*} 
(c_1,c_2,\cdots c_n,0,0,0,\cdots)\prec\biggl(\sum_{i=1}^nc_i,0,0,0,\cdots\biggr),
\end{equation*} 
so  Lemma \ref{lem:maj-finite} implies $-\sum_{i=1}^nc_i\log c_i\geq-(\sum_{i=1}^nc_i)\log(\sum_{j=1}^nc_j)$.
Letting $n\to \infty$, then by the continuity of the function $f(x)=-x\log x$ we get
\begin{equation}\label{eq:c-log}
-\sum_{i=1}^\infty c_i\log c_i\geq-\sum_{i=1}^\infty c_i\log\bigl(\sum_{j=1}^\infty c_j\bigr).
\end{equation}
We assume first that $b$ has only finitely many nonzero elements. Without loss of generality,
we suppose 
\begin{equation*} 
(a_1,a_2,\cdots a_n,\cdots)=a\prec b=(b_1,b_2,\cdots,b_m,0,0\cdots ),
\end{equation*}
where $b_i>0$, for $1\leq i\leq m$. Thus there exists $N$ such that
$\sum_{i=N+1}^\infty a_i<\min\{b_1,b_2,\cdots b_m\}$, so
\begin{equation*} 
\biggl(a_1,a_2,\cdots a_N,\sum_{i=N+1}^\infty a_i,0,0,\cdots\biggr)\prec (b_1,b_2,\cdots,b_m,0,0\cdots ),
\end{equation*}
which implies
\begin{equation*}
-\sum_{i=1}^N a_i\log a_i-\sum_{i=N+1}^\infty a_i \log\biggl(\sum_{j=N+1}^\infty a_j\biggr)\geq-\sum_{i=1}^m b_i\log(b_i).
\end{equation*}
By inequality \eqref{eq:c-log}, we know $H(a)\geq H(b)$.

For the general case of $b$, we note that for any positive integers $s$, the relation $a\prec b$ entails
\begin{equation*} 
(a_1,a_2,\cdots a_n,\cdots)\prec \biggl(b_1,b_2,\cdots,b_s,\sum_{i=s+1}^\infty b_i,0,0\cdots \biggr),
\end{equation*}
so the proof above yields  
\begin{equation*}
H(a)\geq-\sum_{i=1}^s b_i\log b_i-\sum_{i=s+1}^\infty b_i \log\biggr(\sum_{j=s+1}^\infty b_j\biggr). 
\end{equation*}
Letting $s\to \infty$, we obtain $H(a)\geq H(b)$, as desired.
\end{proof}

\begin{lemma}\label{lem:maj-fin2} 
Let $a,b\in{\mathbb{R}}^n$ with $a_i,b_i\geq 0$. If $a\prec b$ and $H(a)=H(b)$,
then $a^\downarrow=b^\downarrow$.
\end{lemma}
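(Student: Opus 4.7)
The plan is to leverage the classical Hardy--Littlewood--P\'olya characterization of majorization (quoted in the introduction) together with strict concavity of Shannon entropy. Since $a,b \in \mathbb{R}^n$ with $a \prec b$, there exists a doubly stochastic matrix $Q$ with $a = Qb$. By Birkhoff's theorem, $Q$ can be written as a convex combination $Q = \sum_{k} t_k P_k$, where each $P_k$ is a permutation matrix and $t_k > 0$, $\sum_k t_k = 1$. Consequently,
$$
a = \sum_{k} t_k (P_k b),
$$
expressing $a$ as a convex combination of permutations of $b$.

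Next I would invoke strict concavity of $f(x) = -x\log x$ on $[0,\infty)$ (using $0\log 0 = 0$), which lifts coordinatewise to strict concavity of $H$ on the nonnegative orthant: whenever $u \neq v$ have nonnegative entries and $t \in (0,1)$, one has $H(tu + (1-t)v) > tH(u) + (1-t)H(v)$. Applying this to the Birkhoff decomposition and using permutation-invariance of $H$ gives
$$
H(a) \;=\; H\Bigl(\sum_k t_k P_k b\Bigr) \;\geq\; \sum_k t_k H(P_k b) \;=\; H(b),
$$
which recovers Lemma \ref{lem:maj-finite}. The hypothesis $H(a)=H(b)$ forces equality throughout, and strict concavity then implies that all vectors $P_k b$ appearing with $t_k > 0$ must coincide. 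Therefore $a = P_{k_0} b$ for some permutation $P_{k_0}$, and so $a^\downarrow = b^\downarrow$, as required.

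I do not anticipate a real obstacle here: Hardy--Littlewood--P\'olya plus Birkhoff does the combinatorial heavy lifting, and the entropy inequality becomes strict unless $a$ is literally a permutation of $b$. The only point requiring a line of care is the boundary of the simplex --- some entries of $a$ or $b$ may vanish --- but since $-x\log x$ is strictly concave on the closed half-line $[0,\infty)$ with the convention $0\log 0 = 0$, the strict-concavity step extends without change. An alternative route would be to reduce $\prec$ to a finite composition of T-transforms and show that any nontrivial T-transform strictly increases $H$, but the HLP+Birkhoff argument seems the most direct and most in keeping with the tools already cited in the paper.
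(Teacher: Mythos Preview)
Your argument is correct. The Hardy--Littlewood--P\'olya/Birkhoff decomposition together with strict concavity of $H$ on the nonnegative orthant (which does extend to the boundary, as you note) cleanly forces all $P_k b$ with $t_k>0$ to coincide, whence $a$ is a permutation of $b$.

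The paper, however, takes a different and more self-contained route. Rather than invoking Birkhoff's theorem, it interpolates directly: with $a$ and $b$ already arranged in decreasing order, it sets $c_t = ta + (1-t)b$ for $t\in(0,1)$, observes $a \prec c_t \prec b$, and deduces from Lemma~\ref{lem:maj-finite} and the hypothesis $H(a)=H(b)$ that $H(c_t)$ is constant in $t$. Differentiating twice in $t$ then yields
\[
\sum_i \frac{(a_i-b_i)^2}{ta_i+(1-t)b_i}=0,
\]
forcing $a_i=b_i$ for every $i$. So the paper's proof is an elementary calculus argument that stays entirely within the toolkit already established (only Lemma~\ref{lem:maj-finite} is used), whereas your approach is more structural and imports Birkhoff's theorem from outside the paper. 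Your version makes the mechanism---equality in Jensen for a strictly concave function---transparent; the paper's version avoids any combinatorial decomposition and instead extracts the equality condition analytically. Your mentioned T-transform alternative is in spirit closer to the paper's interpolation idea, but the paper linearly interpolates between $a$ and $b$ globally rather than reducing to a chain of two-coordinate moves.
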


\begin{proof}
Without loss of generality, we assume that
\begin{equation*} 
a^\downarrow=(a_1,a_2,\cdots a_m,0\cdots 0)\prec b^\downarrow=(b_1,b_2,\cdots,b_m,0\cdots 0),
\end{equation*}
where $m\leq n$ and $a_i+b_i\neq 0$, for $1\leq i\leq m$.
For $0<t<1$, denote 
$$
c_t=\bigl(ta_1+(1-t)b_1,ta_2+(1-t)b_2,\cdots ta_m+(1-t)b_m\bigr).
$$ 
It is easy to verify that $a\prec c_t\prec b$, so Lemma \ref{lem:maj-finite} implies 
\begin{equation}\label{eq:ha}
-\sum_{i=1}^m\bigl(ta_i+(1-t)b_i)\log(ta_i+(1-t)b_i\bigr)=H(a). 
\end{equation}
Taking the second derivative with respect to $t$ on both sides of  equation \eqref{eq:ha}, we get
$$
\sum_{i=1}^m\frac{(a_i-b_i)^2}{ta_i+(1-t)b_i}=0,
$$
which yields $a_i=b_i$ for $1\leq i\leq m$, so $a^\downarrow=b^\downarrow$.
\end{proof}

The following proposition shows that the Shannon entropy of an infinite probability distribution is strictly 
monotone with respect to the relation of majorization.

\begin{proposition}\label{prop:ha=hb}
Let $a,b\in l_1^1( \mathbb{R}^+)$ such that all $a_ib_i>0$.
If $a\prec b$ and $H(a)=H(b)<\infty$, then $a^\downarrow=b^\downarrow$.
\end{proposition}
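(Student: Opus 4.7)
The plan is to adapt the finite-dimensional argument of Lemma \ref{lem:maj-fin2} to the infinite setting, but to replace the second-derivative computation (which would require delicate justification of termwise differentiation under an infinite sum) with a direct strict-concavity argument.

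First, since both the Shannon entropy and the relation $\prec$ depend only on the non-increasing rearrangements, I may assume without loss of generality that $a=a^\downarrow$ and $b=b^\downarrow$; the hypothesis $a_ib_i>0$ for all $i$ then guarantees that every entry of both sequences is strictly positive. For each $t\in(0,1)$ I introduce the convex combination $c_t=ta+(1-t)b$, i.e.\ $c_t(i)=ta_i+(1-t)b_i$. Two quick verifications: $c_t\in l_1^1(\mathbb{R}^+)$ with $c_t(i)>0$ for all $i$; and, since $a,b$ are non-increasing, so is $c_t$, hence $c_t^{\downarrow}=c_t$. Linearity of the partial sums then yields $a\prec c_t\prec b$, so Theorem \ref{thm:maj-infinite} applied twice gives $H(a)\geq H(c_t)\geq H(b)$. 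The hypothesis $H(a)=H(b)<\infty$ therefore forces $H(c_t)=H(a)$ for every $t\in(0,1)$.

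The next step exploits the strict concavity of $f(x)=-x\log x$ on $(0,\infty)$. For each $i$,
$$
f\bigl(c_t(i)\bigr)\geq tf(a_i)+(1-t)f(b_i),
$$
with strict inequality whenever $a_i\neq b_i$. Set $d_i:=f(c_t(i))-tf(a_i)-(1-t)f(b_i)\geq 0$. Since $H(a),H(b)<\infty$ and all the series involved consist of non-negative terms, the inequality can be summed termwise, giving
$$
\sum_{i=1}^{\infty}d_i=H(c_t)-tH(a)-(1-t)H(b)=H(a)-tH(a)-(1-t)H(a)=0.
$$
As a sum of non-negative terms vanishing is only possible when every term vanishes, $d_i=0$ for all $i$, i.e.\ equality holds in each strict concavity estimate. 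Strict concavity then forces $a_i=b_i$ for every $i$, so $a^\downarrow=b^\downarrow$.

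The only point that requires real care is the passage from the termwise inequality to the identity $\sum d_i = 0$; this is unproblematic here because each $d_i\geq 0$ and both $H(a)$ and $H(c_t)$ are finite, so no convergence subtleties arise. The hypothesis $a_ib_i>0$ is used crucially at the last step to ensure strict concavity of $f$ at the relevant arguments; without it one could only conclude equality up to entries where both sequences vanish.
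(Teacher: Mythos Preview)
Your proof is correct and takes a genuinely different route from the paper's. The paper proceeds by an inductive ``peeling'' argument: it locates $a_1^\downarrow$ between two consecutive entries of $b^\downarrow$, performs a single T-transform on $b$ to manufacture an intermediate sequence $c$ with $a\prec c\prec b$ that differs from $b$ in only two coordinates, and then invokes the finite-dimensional Lemma~\ref{lem:maj-fin2} (proved there via a second-derivative computation) to force $a_1^\downarrow=b_k^\downarrow$; removing these matched entries and iterating, it finally uses $\sum a_i=\sum b_i$ together with positivity to conclude that the matching is exhaustive. Your argument instead takes the global convex combination $c_t=ta^\downarrow+(1-t)b^\downarrow$, notes that it is already non-increasing so that $a\prec c_t\prec b$ follows immediately from linearity of partial sums, and then extracts $a_i=b_i$ for all $i$ simultaneously from strict concavity of $-x\log x$ on $(0,\infty)$ and the finiteness of the entropies. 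This is shorter and more elementary: it bypasses the induction, the reduction to the finite lemma, and the differentiation step altogether. The hypothesis $a_ib_i>0$ also enters more transparently in your version, exactly at the point where strict (rather than merely weak) concavity is needed.
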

\begin{proof}
Let 
$$
a^\downarrow=(a_1,a_2,\cdots a_n, \cdots)\prec b^\downarrow=(b_1,b_2,\cdots,b_n,\cdots).
$$
Then $a_1\leq b_1$, so there exists $k$ such that $a_1\in (b_{k+1},b_k]$, which implies
$a_1=tb_{k+1}+(1-t)b_k$, for some $0\leq t\leq 1$. Denote
$$
c=\bigl(b_1,b_2,\cdots tb_{k+1}+(1-t)b_k,(1-t)b_{k+1}+tb_k, b_{k+2},\cdots\bigr),
$$
it is clear that $a\prec c\prec b$. Thus by Theorem \ref{thm:maj-infinite} and the assumption, we have
$H(a)=H(b)=H(c)$,
so Lemma \ref{lem:maj-fin2} implies $a_1=tb_{k+1}+(1-t)b_k=b_k$.
Let 
$$
\tilde{a}=(a_2,a_3,\cdots a_n, \cdots)\hbox{ and } \tilde{b}=(b_1,b_2,\cdots b_{k-1},b_{k+1},\cdots).
$$
Thus $\tilde{a}\prec \tilde{b}$ and $H(\tilde{a})=H(\tilde{b})<\infty$,
so by a similar proof, we conclude that there exists $k_1>k$ such that
$a_2=b_{k_1}$.
By mathematical induction, we get $a_{n+1}=b_{k_n}$, for some subsequence of $b$.
However we have
$\sum_{i=1}^\infty a_i=\sum_{i=1}^\infty b_i=1$, and then the fact that  all $a_ib_i>0$ implies 
$a_{n+1}=b_{k_n}=b_{n+1}$, that is $a^\downarrow=b^\downarrow$.
\end{proof}

\begin{remark}
The conditions that all $a_ib_i>0$ and $H(a)<\infty$ are essential in
Proposition 2.4. Indeed,
it is obvious that both conditions $a\prec b$ and $H(a)=H(b)<\infty$ are not changed
if we add some zeros for $a$. Furthermore, if $H(a)=H(b)=\infty$,
we may replace $a$ by $a'=(\frac{a_1}{2},\frac{a_1}{2},a_2,\cdots)$.
Then $a'\prec b$ and $H(a')=H(b)=\infty$.
However, it is a contradiction that $a^\downarrow=b^\downarrow=a'^\downarrow$.
\end{remark}

 \section{Von Neumann entropy of quantum states}

Let us recall that two quantum states $\rho_1$ and $\rho_2$ are ${\mathcal{L}}^1$-{\em equivalent} 
if and only if there is a sequence of unitary operators $\{U_i\}_{i=1}^\infty$
such that $\lim_{n\to\infty}\|\rho_1-U_n\rho_2U_n^*\|_1=0$.

 \begin{proposition}
Let $\rho_1,\rho_2\in S({\mathcal{H}})$. If $\rho_1\prec \rho_2$ and $S(\rho_1)=S(\rho_2)<\infty$,
then $\rho_1$ and $\rho_2$ are ${\mathcal{L}}^1$-equivalent.
\end{proposition}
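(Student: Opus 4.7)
The plan is to translate the hypothesis into eigenvalue sequences and then apply the Section~2 machinery. Set $a=\lambda(\rho_1)$ and $b=\lambda(\rho_2)$; these lie in $l_1^1(\mathbb{R}^+)\cap c_0^*$ and, by the hypothesis $\rho_1\prec\rho_2$ and $S(\rho_1)=S(\rho_2)<\infty$, satisfy $a\prec b$ and $H(a)=H(b)<\infty$. My two-phase strategy is first to deduce $a^\downarrow=b^\downarrow$ from Proposition~2.4 or Lemma~2.3, and then to build a sequence of unitaries that approximately conjugate $\rho_2$ into $\rho_1$ by matching up eigenvectors.

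To show $a^\downarrow=b^\downarrow$, I first want to argue that $\rho_1$ and $\rho_2$ have the same number of nonzero eigenvalues. Write $n_k=\#\{i:\lambda_i(\rho_k)>0\}\in\{1,2,\ldots\}\cup\{\infty\}$. If $n_1<\infty$ then $\sum_{i=1}^{n_1}a_i=1$, and majorization together with $\sum b_i=1$ forces $b_i=0$ for $i>n_1$, hence $n_2\leq n_1$; applying Lemma~2.3 to the truncated vectors $(a_1,\ldots,a_{n_1})$ and $(b_1,\ldots,b_{n_1})$ in $\mathbb{R}^{n_1}$ yields $a^\downarrow=b^\downarrow$ and forces $n_1=n_2$. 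If instead $n_1=\infty$ with $n_2=m<\infty$, I would revisit the proof of Theorem~2.2: choose $N$ with $\sum_{i>N}a_i<\min_{j\leq m}b_j$, set $\tilde a=(a_1,\ldots,a_N,\sum_{i>N}a_i,0,\ldots)$, observe $\tilde a\prec b$, and obtain the chain $H(a)\geq H(\tilde a)\geq H(b)$; the first inequality is the strict form of~(2.1) (equivalently, Lemma~2.3 applied to the tail), which is strict whenever more than one $a_i$ with $i>N$ is positive, a condition forced here by $n_1=\infty$. This contradicts $H(a)=H(b)$, so $n_1=\infty$ implies $n_2=\infty$, and Proposition~2.4 applies directly to give $a^\downarrow=b^\downarrow$.

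Once $\lambda(\rho_1)^\downarrow=\lambda(\rho_2)^\downarrow$, I would fix orthonormal eigenbases $\{f_i\}$ and $\{g_i\}$ of the supports of $\rho_1$ and $\rho_2$ corresponding to the common positive eigenvalues $\lambda_i$, listed in non-increasing order. In the finite-rank case $n_1=n_2=n$, the kernels of $\rho_1$ and $\rho_2$ have equal dimension inside $\mathcal{H}$, so the map $g_i\mapsto f_i$ extends to a unitary $U\in\mathcal{B(H)}$ with $\rho_1=U\rho_2U^*$, and the constant sequence $U_n\equiv U$ witnesses $\mathcal{L}^1$-equivalence. In the infinite-rank case the kernels may differ in dimension and exact unitary equivalence may fail, so I would instead choose, for each $n$, a unitary $U_n$ on $\mathcal{H}$ with $U_ng_i=f_i$ for $1\leq i\leq n$, obtained by extending the partial isometry on the span of $g_1,\ldots,g_n$ to a unitary between the infinite-dimensional orthogonal complements. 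A short calculation then gives
\[
\rho_1-U_n\rho_2U_n^* \;=\; \sum_{i>n}\lambda_i\bigl(f_i\otimes f_i - (U_ng_i)\otimes(U_ng_i)\bigr),
\]
and the triangle inequality for the trace norm bounds this by $2\sum_{i>n}\lambda_i\to 0$.

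The step I expect to be most delicate is ruling out the mixed case $n_1=\infty$ with $n_2<\infty$: this requires extracting strictness from the chain $H(a)\geq H(\tilde a)\geq H(b)$ implicit in the proof of Theorem~2.2, since Proposition~2.4 itself is stated only under the condition that all $a_ib_i>0$. Everything else reduces to a direct invocation of Proposition~2.4 (or Lemma~2.3) together with a standard partial-isometry extension on $\mathcal{H}$.
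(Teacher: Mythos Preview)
Your argument is correct, and the underlying idea matches the paper's: reduce to eigenvalue sequences, invoke the Section~2 results to conclude the nonzero eigenvalues (with multiplicity) coincide, and then deduce $\mathcal{L}^1$-equivalence. The paper compresses both phases into two citations: it invokes Proposition~2.4 directly to assert equality of the nonzero spectra, and then quotes \cite[Proposition~3.1]{AK06} (Arveson--Kadison) for the equivalence between ``same nonzero eigenvalues with multiplicity'' and $\mathcal{L}^1$-equivalence.

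Your treatment is more explicit in two places where the paper is terse. First, Proposition~2.4 is stated only under the hypothesis that all $a_ib_i>0$, so its application to arbitrary eigenvalue sequences is, strictly speaking, not immediate; your case split on $(n_1,n_2)$ and the strict-entropy argument ruling out $n_1=\infty$, $n_2<\infty$ fill exactly this gap (the paper evidently intends the reader to extract the same conclusion from the \emph{proof} of Proposition~2.4 rather than its statement). Second, rather than citing Arveson--Kadison, you build the approximating unitaries $U_n$ by hand via partial-isometry extension and the tail estimate $\|\rho_1-U_n\rho_2U_n^*\|_1\le 2\sum_{i>n}\lambda_i$; this makes the proof self-contained at the cost of a short extra computation. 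Both routes are valid; yours trades an external reference for an elementary construction, and also patches a small imprecision in the paper's invocation of Proposition~2.4.
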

\begin{proof}
By Proposition \ref{prop:ha=hb}, all the non-zero spectral points (including multiplicity) of
$\rho_1$ and $\rho_2$ are the same, which is equivalent to the fact
 that $\rho_1$ and $\rho_2$ are ${\mathcal{L}}^1$-equivalent, by \cite[Proposition 3.1]{AK06}.
 \end{proof}

The following two results give extensions of Uhlmann's theorem in an infinite dimensional Hilbert space. We need some eigenvalue estimates of Weyl given in \cite{Weyl49}. Let $A$ be a compact operator with eigenvalues
$\lambda_1\geq\lambda_2\geq\lambda_3\geq\cdots$ and ${ \mathcal{P}}_n$ be the set of all $n$-dimensional projections. Then
$\max_{P\in{ \mathcal{P}}_n}{\rm tr}(AP)=\sum_{i=1}^n\lambda_i$.

\begin{proposition}
Let $\rho_1,\rho_2\in S({\mathcal{H}})$. If $\rho_1$ is finite rank, then $\rho_1\prec \rho_2$
if and only if there exists a mixed unitary operation $\Phi$ such that $\Phi(\rho_2)=\rho_1$.
\end{proposition}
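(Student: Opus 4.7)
The plan is to reduce the forward implication to the finite-dimensional Uhlmann theorem via the observation that the finite-rank hypothesis on $\rho_1$ forces $\rho_2$ itself to be finite rank; the reverse implication follows directly from the Weyl estimate cited immediately before the statement.

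First I would handle the easy direction. Assume $\rho_1 = \Phi(\rho_2) = \sum_{i=1}^n t_i U_i \rho_2 U_i^*$ with each $U_i$ unitary. Since $U_i^* P U_i \in \mathcal{P}_k$ whenever $P \in \mathcal{P}_k$, the Weyl formula yields
\begin{equation*}
{\rm tr}(\rho_1 P) = \sum_{i=1}^n t_i\, {\rm tr}(\rho_2\, U_i^* P U_i) \leq \sum_{i=1}^n t_i \sum_{j=1}^k \lambda_j(\rho_2) = \sum_{j=1}^k \lambda_j(\rho_2).
\end{equation*}
Maximising the left-hand side over $P\in \mathcal{P}_k$ recovers $\sum_{j=1}^k \lambda_j(\rho_1) \leq \sum_{j=1}^k \lambda_j(\rho_2)$, and the normalisation ${\rm tr}(\rho_1)={\rm tr}(\rho_2)=1$ gives equality of total sums, so $\rho_1\prec \rho_2$.

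For the nontrivial direction, assume $\rho_1\prec \rho_2$ with ${\rm rank}(\rho_1)=m<\infty$. The decisive preliminary step is to show that $\rho_2$ itself has rank at most $m$. Since $\lambda_i(\rho_1)=0$ for $i>m$, one has $\sum_{i=1}^k \lambda_i(\rho_1)=1$ for every $k\geq m$; majorization then forces $\sum_{i=1}^k \lambda_i(\rho_2)\geq 1$, and since ${\rm tr}(\rho_2)=1$ bounds this from above we get $\sum_{i=1}^k \lambda_i(\rho_2)=1$ for all $k\geq m$, i.e.\ $\lambda_i(\rho_2)=0$ for $i>m$. This spectral-collapse step is the conceptual heart of the argument: the finite-rank hypothesis transports the problem back to a finite-dimensional setting.

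Once this is in hand, I would set $\mathcal{K}={\rm ran}(\rho_1)+{\rm ran}(\rho_2)$, a finite-dimensional subspace of $\mathcal{H}$. Because $\rho_1$ and $\rho_2$ both vanish on $\mathcal{K}^\perp$, their compressions to $\mathcal{K}$ are honest density operators with the same nonzero spectra as the originals, so $\rho_1|_{\mathcal{K}}\prec \rho_2|_{\mathcal{K}}$ holds in the finite-dimensional sense. Uhlmann's finite-dimensional theorem then supplies unitaries $W_1,\ldots,W_n$ on $\mathcal{K}$ and weights $t_i>0$ with $\sum_i t_i=1$ such that $\rho_1|_{\mathcal{K}}=\sum_i t_i W_i \rho_2|_{\mathcal{K}} W_i^*$. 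Lifting each $W_i$ to the unitary $U_i = W_i \oplus I_{\mathcal{K}^\perp}$ on $\mathcal{H}$ and using that $\rho_2$ is supported in $\mathcal{K}$ yields a mixed unitary operation $\Phi(X)=\sum_{i=1}^n t_i U_i X U_i^*$ on $\mathcal{H}$ (the sum is finite, as the paper's definition demands) with $\Phi(\rho_2)=\rho_1$. I do not foresee a genuine obstacle beyond the spectral-collapse step; once that is in place, the compression-and-extension procedure is routine.
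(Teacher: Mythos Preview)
Your argument is correct and follows essentially the same route as the paper: both directions use the Weyl estimate for sufficiency and, for necessity, exploit the spectral-collapse observation that $\rho_1\prec\rho_2$ with $\rho_1$ finite rank forces $\rho_2$ to be finite rank, then reduce to Uhlmann's finite-dimensional theorem and extend the resulting unitaries by the identity on the complement. The only cosmetic difference is that the paper first conjugates $\rho_2$ by a global unitary $U$ (sending the eigenbasis of $\rho_2$ to that of $\rho_1$) so as to work inside $\mathcal{H}_1=\mathrm{ran}(\rho_1)$, whereas you work directly in $\mathcal{K}=\mathrm{ran}(\rho_1)+\mathrm{ran}(\rho_2)$; your variant avoids the auxiliary $U$ at the cost of a possibly larger (but still finite-dimensional) ambient space, and neither choice affects the substance of the proof.
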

\begin{proof}
Necessity. If $\rho_1$ is finite rank, then denote 
$\lambda(\rho_1)=\bigl(\lambda_1(\rho_1),\lambda_2(\rho_1)\cdots \lambda_m(\rho_1),0,0\cdots\bigr)$.
As $\rho_1\prec \rho_2$, we have that $\rho_2$ is also finite rank and
$\lambda(\rho_2)=\bigl(\lambda_1(\rho_2),\lambda_2(\rho_2)\cdots \lambda_n(\rho_2),0,0\cdots \bigr)$ where $n\leq m$.
Using the spectral decomposition of the states $\rho_1$ and $\rho_2$, we conclude that there exist
two orthonormal bases $\{x_i\}_{i=1}^{\infty}$ and $\{y_i\}_{i=1}^{\infty}$ of ${\mathcal{H}}$ such that
$$
\rho_1=\sum_{i=1}^m\lambda_i(\rho_1)x_i\otimes x_i \hbox{ and }  \rho_2=\sum_{i=1}^n\lambda_i(\rho_2)y_i\otimes y_i.
$$
Let ${\mathcal{H}}_1$ be the subspace spanned by $\{x_i\}_{i=1}^{m}$ and $U$ be the unitary operator
defined by $Uy_i=x_i$, for $i=1,2\cdots$. Then $\rho_1|_{{\mathcal{H}}_1}$ and
$U\rho_2U^*|_{{\mathcal{H}}_1}$ can be represented by $m\times m$ matrices which have spectra
$\bigl(\lambda_1(\rho_1),\lambda_2(\rho_1)\cdots \lambda_m(\rho_1)\bigr)$ and 
$\bigl(\lambda_1(\rho_2),\lambda_2(\rho_2)\cdots \lambda_n(\rho_2),\cdots 0\bigr)\in{\mathbb{R}}^m$, respectively. 
Thus Uhlmann's theorem (\cite{AU82} or \cite[Theorem 4.1.1]{Nielsen02}) implies that there exists a mixed unitary operation
$\widetilde{\Phi}$ on ${\mathcal{B(H}}_1)$, that is
$\rho_1|_{{\mathcal{H}}_1}=\sum_{i=1}^mt_iV_i(U\rho_2U^*)|_{{\mathcal{H}}_1}V_i^*$, where 
$V_i\in{\mathcal{ B(H}}_1)$ are unitary operators, $\sum_{i=1}^mt_i=1$ and  all $t_i>0$. 
As $(U\rho_2U^*)|_{{\mathcal{H}}_1^\perp}=0$, we denote $U_i={\rm diag}(V_i,I_{{\mathcal{H}}_1^\perp})$,
so $U_i\in {\mathcal{B(H}})$ are unitary operators, and $\rho_1=\sum_{i=1}^mt_iU_i(U\rho_2U^*)U_i^*$.

Sufficiency. By Weyl's estimates \cite{Weyl49},
we get for $n=1,2\cdots$,   
\begin{align*}
\sum_{i=1}^n \lambda_i(\rho_1)&=\max_{P\in{ \mathcal{P}}_n}\{{\rm tr}(\rho_1P)\} 
=\max_{P\in{\mathcal{P}}_n}\biggl\{{\rm tr}\biggl(\sum_{i=1}^m t_iU_i\rho_2 U_i^*P\biggr)\biggr\} \\
&=\max_{P\in{\mathcal{P}}_n}\biggl\{\sum_{i=1}^m t_i{\rm tr}(\rho_2 U_i^*PU_i)\biggr\}  
\leq\sum_{i=1}^m t_i\max_{P\in{\mathcal{P}}_n}\{{\rm tr}(\rho_2 U_i^*PU_i)\} \\
&= \max_{P\in{\mathcal{P}}_n}\{{\rm tr}(\rho_2P)\} 
=\sum_{i=1}^n \lambda_i(\rho_2).
\end{align*}
\end{proof}

\begin{theorem}\label{thm:maj-equiv}
Let $\rho_1,\rho_2\in S({\mathcal{H}})$.  Then the following three conditions are equivalent:
\begin{itemize}
\item[{\rm (a)}] $\rho_1\prec \rho_2$.
\item[{\rm (b)}] There exists a sequence of mixed unitary operation $\Psi_n$ and a bi-stochastic quantum operation $\Psi$ on $S({\mathcal{H}})$ such that $\lim_{n\to\infty} \|\Psi_n(\rho)-\Psi(\rho)\|_1=0$ for all $\rho\in S({\mathcal{H}})$, and $\Psi(\rho_2)=\rho_1$.
\item[{\rm (c)}] There exists a bi-stochastic quantum operation $\Psi$ such that $\Psi(\rho_2)=\rho_1$.
\end{itemize}
\end{theorem}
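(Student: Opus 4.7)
\medskip\noindent\textbf{Proof plan.} The strategy is to establish the cycle (b) $\Rightarrow$ (c) $\Rightarrow$ (a) $\Rightarrow$ (b). The implication (b) $\Rightarrow$ (c) is immediate, since the bi-stochastic $\Psi$ produced in (b) already satisfies $\Psi(\rho_2)=\rho_1$.

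For (c) $\Rightarrow$ (a) I would reuse the Weyl-estimate argument from the sufficiency half of Proposition 3.2. For any $P\in\mathcal{P}_n$,
$${\rm tr}(\rho_1 P) = {\rm tr}\bigl(\Psi(\rho_2) P\bigr) = {\rm tr}\bigl(\rho_2 \Psi^\dag(P)\bigr).$$
Bi-stochasticity of $\Psi$ forces $\Psi^\dag$ to be bi-stochastic as well (the Kraus family of $\Psi^\dag$ is just $\{A_i^*\}$), so $A:=\Psi^\dag(P)$ satisfies $0\leq A\leq I$ and ${\rm tr}(A)=n$. Expanding in the eigenbasis $\{e_j\}$ of $\rho_2$, the coefficients $a_j=\langle e_j,Ae_j\rangle$ lie in $[0,1]$ with $\sum_j a_j=n$, and the weighted sum $\sum_j\lambda_j(\rho_2)a_j$ is maximized by placing all the mass on the top $n$ entries, yielding ${\rm tr}(\rho_2 A)\leq\sum_{i=1}^n\lambda_i(\rho_2)$. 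Maximizing over $P$ via Weyl's identity produces $\sum_{i=1}^n\lambda_i(\rho_1)\leq\sum_{i=1}^n\lambda_i(\rho_2)$ for every $n$; equality of the total sums is forced by trace preservation, so $\rho_1\prec\rho_2$.

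The substantive implication is (a) $\Rightarrow$ (b). By the orthostochastic characterization of infinite majorization cited in the introduction, there is a unitary $U=(U_{ij})$ with $\lambda_i(\rho_1)=\sum_j|U_{ij}|^2\lambda_j(\rho_2)$. Diagonalize $\rho_1=\sum_i\lambda_i(\rho_1)\,f_i\otimes f_i$ and $\rho_2=\sum_j\lambda_j(\rho_2)\,e_j\otimes e_j$, and take the rank-one Kraus operators $K_{ij}:=U_{ij}\,f_i\otimes e_j$. A short calculation using $(x\otimes y)^*=y\otimes x$ together with the unitarity of $U$ gives
$$\sum_{ij}K_{ij}^*K_{ij}=\sum_j\Bigl(\sum_i|U_{ij}|^2\Bigr)\,e_j\otimes e_j=I,$$
and symmetrically $\sum_{ij}K_{ij}K_{ij}^*=I$, so $\Psi(X):=\sum_{ij}K_{ij}XK_{ij}^*$ is a bi-stochastic quantum channel; evaluating on $\rho_2$ yields $\sum_i\bigl(\sum_j|U_{ij}|^2\lambda_j(\rho_2)\bigr)\,f_i\otimes f_i=\rho_1$.

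For the mixed-unitary approximants $\Psi_n$, I would truncate $U$ to its principal $n\times n$ block $U^{(n)}$ and replace it by the unitary part in its polar decomposition (perturbing slightly if $U^{(n)}$ is singular) to obtain an $n\times n$ unitary $\widetilde U_n$ with $\widetilde U_n\to U$ in the strong operator topology. Birkhoff--von Neumann applied to the finite doubly stochastic matrix $|\widetilde U_n|^2$ produces a convex combination $|\widetilde U_n|^2=\sum_k t_k^{(n)}P_{\sigma_k^{(n)}}$ of permutation matrices; the corresponding basis-permutation unitaries $V_k^{(n)}$ on $\mathcal{H}$, extended as the identity on the orthogonal complement of $\mathrm{span}\{e_1,\dots,e_n,f_1,\dots,f_n\}$, assemble into a mixed unitary $\Psi_n(X):=\sum_k t_k^{(n)}V_k^{(n)}XV_k^{(n)*}$. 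The main obstacle is upgrading the easy statement $\Psi_n(\rho_2)\to\rho_1$ to pointwise trace-norm convergence $\Psi_n(\rho)\to\Psi(\rho)$ on \emph{every} $\rho\in S(\mathcal{H})$; the plan is to verify the convergence first on finite-rank states, using strong convergence of the $\widetilde U_n$ together with the explicit Kraus form of $\Psi$, and then to extend to all of $S(\mathcal{H})$ via an $\varepsilon/3$ argument that exploits the uniform bound $\|\Psi_n\|_{1\to 1}\leq 1$ and the trace-norm density of finite-rank states.
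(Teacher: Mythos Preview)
Your treatment of (b)$\Rightarrow$(c) and (c)$\Rightarrow$(a) is correct and essentially identical to the paper's: both use Weyl's variational formula and the observation that $B=\Psi^\dag(P)$ satisfies $0\le B\le I$ and ${\rm tr}(B)=n$, from which the majorization inequalities follow.

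The gap is in (a)$\Rightarrow$(b). Your bi-stochastic $\Psi$ built from the rank-one Kraus operators $K_{ij}=U_{ij}\,f_i\otimes e_j$ is fine and does send $\rho_2$ to $\rho_1$; note that it acts as
\[
\Psi(X)=\sum_i\Bigl(\sum_j|U_{ij}|^2\langle Xe_j,e_j\rangle\Bigr)f_i\otimes f_i,
\]
i.e.\ it first pinches $X$ to its diagonal in the $\{e_j\}$ basis and then pushes the resulting probability vector through the doubly stochastic matrix $|U|^2$. The problem is your proposed approximants. A Birkhoff decomposition $|\widetilde U_n|^2=\sum_k t_k^{(n)}P_{\sigma_k^{(n)}}$ only records the \emph{marginals} $\sum_{k:\sigma_k(j)=i}t_k^{(n)}=(|\widetilde U_n|^2)_{ij}$; the joint statistics of $(\sigma_k(1),\sigma_k(2),\ldots)$ are not determined and depend on the (highly non-unique) decomposition. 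Consequently, for any state with off-diagonal mass in the $\{e_j\}$ basis---say $\rho=\tfrac12(e_1+e_2)\otimes(e_1+e_2)$---the operator $\Psi_n(\rho)$ has off-diagonal entries in the $\{f_i\}$ basis proportional to $\sum_{k:\{\sigma_k(1),\sigma_k(2)\}=\{a,b\}}t_k^{(n)}$, and there is no mechanism forcing these to vanish in the limit. So $\Psi_n(\rho)$ need not converge to your $\Psi(\rho)$ for general $\rho$, and the $\varepsilon/3$ argument cannot even get started. (There is a secondary issue: the unitary polar part of the $n\times n$ corner $U^{(n)}=P_nUP_n$ need not converge to $U$ in any useful sense, since the singular values of $U^{(n)}$ can be arbitrarily small.)

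The paper sidesteps this by choosing a different $\Psi$ whose approximation by mixed unitaries is canonical. It conjugates $\rho_2$ by the unitary $\widetilde U$ implementing the orthostochastic matrix so that the diagonal of $\rho_2$ in a new basis $\{e_i\}$ is exactly $\lambda(\rho_1)$, and then takes $\Psi$ to be (up to a final unitary change of basis) the \emph{pinching} $\Phi(X)=\sum_iE_iXE_i$. The point is that pinching admits an explicit mixed-unitary approximation independent of any Birkhoff choice: with $U={\rm diag}(\omega,\omega^2,\ldots,\omega^n,1,1,\ldots)$, $\omega=e^{2\pi i/n}$, the average $\Phi_n(X)=\tfrac1n\sum_{k=1}^nU^kX(U^k)^*$ kills all off-diagonal entries among the first $n$ basis vectors, so $\Phi_n(X)-\Phi(X)$ is supported on a tail whose trace tends to $0$ for \emph{every} trace-class $X$. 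This yields the required pointwise trace-norm convergence on all of $S(\mathcal H)$ with no appeal to Birkhoff.
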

\begin{proof}
$(a)\Rightarrow (b)$. If $\rho_1\prec \rho_2$, then
$\lambda(\rho_1)^\downarrow\prec\lambda(\rho_2)^\downarrow$, it follows from \cite[Theorem 1]{GM64} that
$\lambda(\rho_1)^\downarrow=Q\lambda(\rho_2)^\downarrow$, where $Q$ is an infinite matrix satisfying 
$Q_{ij}=|u_{ij}|^2$ for the elements of a unitary matrix $U$.
By virtue of the spectral decomposition of $\rho_2$, there exists an orthonormal basis $\{y_i\}_{i=1}^{\infty}$ of ${\mathcal{H}}$ 
such that $\rho_2=\sum_{i=1}^\infty\lambda_i(\rho_2)y_i\otimes y_i$. 
Defining a unitary operator $\widetilde{U}\in B({\mathcal{H}})$ via $\langle\widetilde{U}y_i, y_j\rangle=u_{ij}$ for $i,j=1,2\cdots$, 
we set $e_i=\widetilde{U}y_i$ for $i=1,2\cdots$. It is easy to see that $\{e_i\}_{i=1}^{\infty}$ is
an orthonormal basis of ${\mathcal{H}}$ such that
$\lambda_i(\rho_1)=\langle\rho_2e_i,e_i\rangle$ for $i=1,2\cdots$. With respect to the basis $\{e_i\}_{i=1}^{\infty})$, then 
$\rho_2$ has an infinite matrix form given as follows:
\begin{equation*}
\rho_2 =\left(\begin{array}{ccccc}  \lambda_1(\rho_1) &  \lambda_{12}&\cdots& \lambda_{1n}&\cdots\\
\lambda_{21}&\lambda_2(\rho_1)&\cdots& \lambda_{2n}&\cdots\\
\vdots&\vdots&\ddots&\vdots& \vdots\\
\lambda_{n1}&\cdots &\cdots&\lambda_n(\rho_1)  &\cdots\\
\vdots&\vdots& \vdots&\vdots& \ddots\end{array}\right),
\end{equation*} 
where
$\lambda_{ij}=\overline{\lambda_{ji}}$. Denoting the sequence of rank-one projections $E_i\equiv e_i\otimes e_i$,  then define
$\Phi(\rho)=\sum_{i=1}^\infty E_i\rho E_i$, for $\rho\in S({\mathcal{H}})$.
Also for $n=1,2\cdots$, let $\Phi_n$ be the mixed unitary operation defined as
$\Phi_n(\rho)=\frac{1}{n}\sum_{i=1}^nU^i\rho (U^i)^*$, where $U={\rm diag}(\omega,\omega^2\cdots\omega^n,1,1\cdots 1\cdots)$
and $\omega=e^{\frac{2\pi i }{n}}$.
By a direct calculation we obtain 
$$
\Phi_n(e_i\otimes e_j)=\frac{1}{n}\sum_{i=1}^nU^i(e_i\otimes e_j)(U^i)^*
= 
\begin{cases}
e_i\otimes e_i & \text{ if }  i=j \\ 
e_i\otimes e_j & \text{ if } i,j>n-1\\
0  &\text{ if } i\neq j \text{ and } i\leq n-1 \text{ or } j\leq n-1.
\end{cases}
$$
Thus $\Phi_n(\rho_2)$ becomes the following block diagonal matrix with respect to the subspace decomposition
$\mathcal{H}=\bigvee_{i=1}^{n-1} \{e_i\}\bigoplus  \bigvee_{i=n}^\infty \{e_i\}$:
\begin{equation*}
\Phi_n(\rho_2)
=\left(\begin{array}{cc}\rho_{21} & 0\\ 0& \rho_{22}\end{array}\right),
\end{equation*}
where 
\begin{equation*}
\rho_{21}
={\rm diag}\bigl( \lambda_1(\rho_1), \lambda_2(\rho_1)\cdots \lambda_n(\rho_1)\bigr),
\quad \rho_{22}=(I-P_{n})\rho_{2}(I-P_{n})|_{(\bigvee_{i=n+1}^\infty \{e_i\} )},
\end{equation*} 
and $P_{n}$ denotes the orthogonal projection on the subspace 
$\bigvee_{i=1}^{n-1} \{e_i\}$ spanned by $\{e_i\}_{i=1}^{n-1}$.
Then 
$$
\|\Phi_n(\rho_2)-\Phi(\rho_2)\|_1
=2\,{\rm tr} \bigl([\Phi_n(\rho_2)-\Phi(\rho_2)]^{+}\bigr)\leq 2\,{\rm tr}\bigl(\rho_{22}\bigr )\to 0 \text{ as }n\to\infty,
$$ 
since
${\rm tr}\bigl(\Phi_n(\rho_2)\bigr)={\rm tr}\bigl(\Phi(\rho_2)\bigr)$ implies
\begin{align*}
\|\Phi_n(\rho_2)-\Phi(\rho_2)\|_1&={\rm tr}\bigl(|\Phi_n(\rho_2)-\Phi(\rho_2)|\bigr) \\
&=
{\rm tr}\bigl([\Phi_n(\rho_2)-\Phi(\rho_2)]^{+}\bigr)+{\rm tr}\bigl([\Phi_n(\rho_2)-\Phi(\rho_2)]^{-}\bigr) \\
&=
2\, {\rm tr}\bigl([\Phi_n(\rho_2)-\Phi(\rho_2)]^{+}\bigr);
\end{align*}
and $\Phi_n(\rho_2)-\Phi(\rho_2)\leq(I-P_{n})\rho_{2}(I-P_{n})$ yields
$[\Phi_n(\rho_2)-\Phi(\rho_2)]^+\leq P_{+}(I-P_{n})\rho_{2}(I-P_{n})P_{+}$,  so
\begin{align*}
{\rm tr}\bigl([\Phi_n(\rho_2)-\Phi(\rho_2)]^+\bigr)&\leq {\rm tr}\bigl(P_{+}(I-P_{n})\rho_{2}(I-P_{n})P_{+}\bigr) \\
& \leq {\rm tr}\bigl((I-P_{n})\rho_{2}(I-P_{n})\bigr) \\
&={\rm tr}\bigl(\rho_{22}\bigr),
\end{align*}
Here we use the notation $A^+$, $A^-$ for the positive and negative parts of the self-adjoint operator $A$, and
$P_{+}$ is the orthogonal projection on the range of $[\Phi_n(\rho_2)-\Phi(\rho_2)]^+$.
From the spectral decomposition of $\rho_1$ we have
an orthonormal basis $\{f_i\}_{i=1}^{\infty}$ of ${\mathcal{H}}$ such that
$\rho_1=\sum_{i=1}^\infty\lambda_i(\rho_1)f_i \otimes f_i$.  Define a unitary operator $V$ by $Vf_i=e_i$, for $i=1,2\cdots$, so
$\rho_1=V^*\Phi(\rho_2)V$.  Denote
$\Psi_n(\rho)=V^*\Phi_n(\rho)V$ and $\Psi(\rho)=V^*\Phi(\rho)V$ for all $\rho\in S({\mathcal{H}})$. Then
by a similar proof to the above, we obtain
$\lim_{n\to\infty} \|\Psi_n(\rho)-\Psi(\rho)\|_1=\lim_{n\to\infty} \|\Phi_n(\rho)-\Phi(\rho)\|_1=0$ for all $\rho\in S({\mathcal{H}})$.

$(b)\Rightarrow (c)$ is clear.

$(c)\Rightarrow (a)$. Let $\rho_1=\Psi(\rho_2)=\sum_{i=1}^\infty A_i\rho_2 A_i^*$. 
Applying Weyl's eigenvalue theorem, we get for $n=1,2\cdots$,  
\begin{equation}\label{eq:lambda-sum}
\sum_{i=1}^n \lambda_i(\rho_1)=\max_{P\in{ \mathcal{P}}_n}\biggl\{{\rm tr}\biggl(\sum_{i=1}^\infty A_i\rho_2 A_i^*P\biggr)\biggr\}=
\max_{P\in{ \mathcal{P}}_n}\biggl\{{\rm tr}\biggl(\rho_2\sum_{i=1}^\infty A_i^*PA_i\biggr)\biggr\}.
\end{equation}
As $\sum_{i=1}^\infty A_i^*A_i=I$ and $\sum_{i=1}^\infty A_iA_i^*=I$, then $\sum_{i=1}^\infty A_i^*PA_i\leq I$ and
$$
{\rm tr}\biggl(\sum_{i=1}^\infty A_i^*PA_i\biggr)={\rm tr}\biggl(\sum_{i=1}^\infty A_iA_i^*P\biggr)=n.
$$ 
For convenience,
we denote $B=\sum_{i=1}^\infty A_i^*PA_i$ and $u_i=\langle By_i, y_i\rangle$,
where $\{y_i\}_{i=1}^{\infty}$ is an orthonormal basis of ${\mathcal{H}}$ satisfying 
$\rho_2=\sum_{i=1}^\infty\lambda_i(\rho_2)y_i \otimes y_i$. Thus
$0\leq u_i\leq 1$ and $\sum_{i=1}^\infty u_i=n$, so
\begin{equation}\label{eq:trrho2b}
\begin{split}
{\rm tr}(\rho_2B)=\sum_{i=1}^\infty\lambda_i(\rho_2)u_i
&=
\sum_{i=1}^n\lambda_i(\rho_2)u_i+\sum_{i=n+1}^\infty\lambda_i(\rho_2)u_i \\
&\leq
\sum_{i=1}^n\lambda_i(\rho_2)u_i+\lambda_{n+1}(\rho_2)\sum_{i=n+1}^\infty u_i \\
&\leq
\sum_{i=1}^n\lambda_i(\rho_2)u_i+\lambda_{n+1}(\rho_2)\sum_{i=1}^n(1-u_i) \\
&\leq
\sum_{i=1}^n\lambda_i(\rho_2).
\end{split}
\end{equation} 
By equation \eqref{eq:lambda-sum} and inequality \eqref{eq:trrho2b}, we conclude that for all $n=1,2\cdots$,
one has the inequalities $\sum_{i=1}^n \lambda_i(\rho_1)\leq \sum_{i=1}^n \lambda_i(\rho_2)$, as desired.
\end{proof}

\begin{remark}
In a finite dimensional Hilbert space, $\rho_1\prec \rho_2$ is equivalent to $\rho_1=\Phi(\rho_2)$, 
for some mixed unitary operation $\Phi$.
However, for an infinite dimensional Hilbert space, $\rho_1\prec \rho_2$ does not imply 
$\rho_1=\sum_{i=1}^\infty t_iU_i\rho_2U_i^*$, where
$\sum_{i=1}^\infty t_i=1$, $t_i\geq 0$ and $U_i$ are unitary operators for all $i$. Indeed, the condition 
$\rho_1=\sum_{i=1}^\infty t_iU_i\rho_2U_i^*$
yields $\dim (\ker(\rho_1))\leq \dim (\ker(\rho_2))$. But we can supplement many zeros for $\lambda(\rho_1)$.
\end{remark}

The following proposition was obtained in \cite{LiWang12,ZW11} for the finite case,
in which the condition of injectivity of $\rho$ may be dropped.

\begin{proposition}
Let $\rho\in S({\mathcal{H}})$ and $\Phi$ be a bi-stochastic quantum operation.
If $\rho$ is injective and $S(\rho)=S(\Phi(\rho))<\infty$, then $\Phi(\rho)=U\rho U^*$ for a unitary operator $U$.
\end{proposition}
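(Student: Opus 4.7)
The plan is to reduce the problem to a spectral identification via majorization, upgrade that to $\mathcal{L}^1$-equivalence through the entropy hypothesis, and finally promote $\mathcal{L}^1$-equivalence to a single unitary conjugation by exploiting the injectivity of $\rho$.

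Since $\Phi$ is bi-stochastic, the implication $(c)\Rightarrow(a)$ of Theorem \ref{thm:maj-equiv} immediately gives $\Phi(\rho)\prec\rho$. Combined with the equality $S(\Phi(\rho))=S(\rho)<\infty$, the $\mathcal{L}^1$-equivalence proposition that opens this section then yields that $\rho$ and $\Phi(\rho)$ share the same decreasing sequence of non-zero eigenvalues, multiplicities included.

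The decisive step is to show that $\Phi(\rho)$ is itself injective. Fix a family of Kraus operators $\{A_i\}$ for $\Phi$; bi-stochasticity gives $\sum_i A_iA_i^*=I=\sum_iA_i^*A_i$. If $\Phi(\rho)x=0$, then
\begin{equation*}
0=\langle\Phi(\rho)x,x\rangle=\sum_i\langle\rho A_i^*x,A_i^*x\rangle
\end{equation*}
forces every non-negative summand to vanish, and the injectivity of $\rho$ yields $A_i^*x=0$ for each $i$; the unital condition then gives $\|x\|^2=\sum_i\|A_i^*x\|^2=0$, whence $x=0$. With $\rho$ and $\Phi(\rho)$ both injective positive trace-class operators sharing the same decreasing eigenvalue sequence $(\lambda_i)$, the spectral theorem furnishes orthonormal bases $\{e_i\}$ of eigenvectors for $\rho$ and $\{f_i\}$ for $\Phi(\rho)$, aligned so that $\rho e_i=\lambda_ie_i$ and $\Phi(\rho)f_i=\lambda_if_i$ (any orthonormal identification inside each finite-dimensional eigenspace will do). The unitary $U$ defined by $Ue_i=f_i$ then satisfies $U\rho U^*=\Phi(\rho)$.

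I expect the main obstacle to be precisely the injectivity of $\Phi(\rho)$: it is not automatic from injectivity of $\rho$ alone, and it depends essentially on the unital condition $\sum_iA_iA_i^*=I$ rather than merely trace preservation. This is the gap flagged by the preceding remark—$\mathcal{L}^1$-equivalence in infinite dimensions does not by itself promote to a single unitary conjugation, but the injectivity forced by bi-stochasticity repairs this and lets the spectral matching close the argument.
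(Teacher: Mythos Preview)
Your proof is correct and follows essentially the same route as the paper's: use Theorem~\ref{thm:maj-equiv} to get $\Phi(\rho)\prec\rho$, show $\Phi(\rho)$ is injective via the Kraus calculation $\langle\Phi(\rho)x,x\rangle=0\Rightarrow A_i^*x=0\Rightarrow x=\sum_iA_iA_i^*x=0$, and then match the spectral decompositions. The only cosmetic difference is that the paper establishes injectivity first and then invokes Proposition~\ref{prop:ha=hb} directly (which needs both eigenvalue sequences to be strictly positive), whereas you go through the $\mathcal{L}^1$-equivalence proposition to get the non-zero eigenvalues first and use injectivity afterwards; the content is the same.
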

\begin{proof} 
Suppose $\Phi(\rho)=\sum_{i=1}^\infty A_i\rho A_i^*$. Then by Theorem \ref{thm:maj-equiv}, we have $\Phi(\rho)\prec \rho$.
We claim that if $\rho$ is injective,  then so is $\Phi(\rho)$. Thus, suppose
 that $\Phi(\rho)$ were not injective, then there is a vector $x\neq 0$ that satisfies $\Phi(\rho)x=0$, so
$\bigl\langle\sum_{i=1}^\infty A_i\rho A_i^*x, x\bigr\rangle=0$, which yields $A_i^*x=0$, for all $i$.
Thus $x=\sum_{i=1}^\infty A_i A_i^*x=0$, which is a contradiction.
It follows from Proposition \ref{prop:ha=hb} that $\lambda(\rho)^\downarrow=\lambda(\Phi(\rho))^\downarrow$, so
by the spectral decomposition theorem, we get $\Phi(\rho)=U\rho U^*$ for a unitary operator $U$.
\end{proof}

\begin{remark}
In an infinite dimensional Hilbert space, the condition that $\rho$ is injective may not be dropped.
\end{remark}

In the following, we shall characterize the structure of a quantum channel that does not change the
von Neumann entropy of any quantum state. In \cite{MS10}, Moln\'{a}r and Szokol gave
the structure of the map which preserves the relative entropy in a finite-dimensional Hilbert space.
However, we were unable to find a publication that studied the structure of a quantum channel which 
preserves the von Neumann entropy. 
After a preprint version of the present paper was published as arXiv:1304.7442(v1), we received a 
sketch of a shorter proof of Theorem \ref{thm:isom} from M.B.~Ruskai based on the Stinespring representation theorem 
and some techniques from quantum information. 
Here we mainly use methods of operator theory and Kraus's theorem. The following two lemmas are needed.
The set of all compact operators on $\mathcal{H}$ is denoted $\mathcal{K(H)}$.

\begin{lemma}\label{lem:S-props}
(See {\rm \cite{NC00}}.) Let $\rho\in S(\mathcal{H})$.
\begin{itemize}
\item[{\rm (i)}] 
$S(\rho)\geq 0$, and $S(\rho)= 0$ if and only if $\rho$ is a pure state.

\item[{\rm (ii)}] If $\dim(\mathcal{H})=n$, then $S(\rho) \leq \log n$, and $S(\rho)=\log n$ if and only if $\rho=\frac{1}{n}I$.
\end{itemize}
\end{lemma}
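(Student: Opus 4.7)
The plan is to reduce both assertions to statements about the Shannon entropy of the eigenvalue sequence of $\rho$ and then invoke the finite-dimensional majorization lemmas already established as Lemma \ref{lem:maj-finite} and Lemma \ref{lem:maj-fin2}. Via the spectral theorem one writes $\rho=\sum_i\lambda_i\, e_i\otimes e_i$ with $\lambda_i\geq 0$ and $\sum_i\lambda_i=1$, so that $S(\rho)=H(\lambda(\rho))$, and each part becomes a claim about probability vectors.

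For part (i), since $\rho\geq 0$ has trace $1$ its eigenvalues lie in $[0,1]$, and on this interval $-x\log x\geq 0$ with equality precisely at $x\in\{0,1\}$; summing gives $S(\rho)\geq 0$. The equality case forces every $\lambda_i\in\{0,1\}$, and since $\sum_i\lambda_i=1$ exactly one $\lambda_i$ equals $1$ while the rest vanish, so $\rho=e\otimes e$ is a rank-one orthogonal projection, i.e.\ a pure state. The converse direction is immediate by direct substitution.

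For part (ii), working in $\dim\mathcal{H}=n$ the vector $\lambda(\rho)$ lies in ${\mathbb{R}}^n$. I would first verify the elementary majorization $u\prec\lambda(\rho)$, where $u=(1/n,\ldots,1/n)$. Indeed, because $\lambda(\rho)^\downarrow$ is a non-increasing non-negative vector summing to $1$, one has $\sum_{i=k+1}^n\lambda_i^\downarrow\leq(n-k)\lambda_k^\downarrow\leq\frac{n-k}{k}\sum_{i=1}^k\lambda_i^\downarrow$, which together with $\sum_{i=1}^n\lambda_i^\downarrow=1$ yields $\sum_{i=1}^k\lambda_i^\downarrow\geq k/n=\sum_{i=1}^k u_i$ for each $k\leq n$. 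Lemma \ref{lem:maj-finite} then gives $S(\rho)=H(\lambda(\rho))\leq H(u)=\log n$. If equality holds, Lemma \ref{lem:maj-fin2} applied to $u\prec\lambda(\rho)$ forces $\lambda(\rho)^\downarrow=u$, and the spectral decomposition of $\rho$ becomes $\frac{1}{n}I$. The converse $S(I/n)=\log n$ is a direct computation.

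No step is genuinely difficult, which is consistent with this lemma being folklore and cited from \cite{NC00}. The only point requiring a line of care is the elementary verification of the majorization $u\prec\lambda(\rho)$; everything else follows mechanically from the probability-vector results already available in the paper.
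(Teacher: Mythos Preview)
Your proof is correct. Note, however, that the paper does not actually prove this lemma: it is stated with a citation to \cite{NC00} and no argument is given. So there is no ``paper's own proof'' to compare against.

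That said, your approach is worth a brief comment because it is more in keeping with the internal logic of the paper than simply citing an external reference. You derive both parts from the majorization machinery already set up in Section~2: part (ii) in particular is obtained by checking the elementary majorization $u=(1/n,\dots,1/n)\prec\lambda(\rho)$ and then invoking Lemma~\ref{lem:maj-finite} for the inequality and Lemma~\ref{lem:maj-fin2} for the equality case. This makes the lemma a genuine corollary of results the paper has proved, rather than an imported fact. The standard textbook route (as in \cite{NC00}) instead appeals directly to the concavity of $-x\log x$ or to Jensen's inequality; your route is equivalent in strength but ties the statement back to the majorization theme of the paper. The verification of $u\prec\lambda(\rho)$ via the chain $\sum_{i>k}\lambda_i^\downarrow\le(n-k)\lambda_k^\downarrow\le\frac{n-k}{k}\sum_{i\le k}\lambda_i^\downarrow$ is clean and correct.
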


\begin{lemma}\label{lem:commut}
(See {\rm \cite{Li11}}.) Let $\Phi$ be a quantum operation with 
a set of Kraus operators $K(\Phi)\equiv\{A_i\in {\mathcal{B(H)}},  i=1,2\cdots \}$ and such that $\Phi^\dag(I)\leq I$. Then
$\{B\in {\mathcal{K(H)}}:\Phi(B)=B\}=\{B\in {\mathcal{K(H)}}:\Phi^\dag(B)=B\}\subseteq {\mathcal{A}}'$, where $\mathcal{A}'$
is the commutant of $\mathcal{A}=\{A_i,A_i^*: A_i\in {K(\Phi)},  i=1,2\cdots \}$.
\end{lemma}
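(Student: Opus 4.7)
The plan is to follow a three-stage strategy based on the Kadison--Schwarz inequality together with a duality symmetry between $\Phi$ and $\Phi^\dag$. First I observe that the hypotheses $\sum_i A_iA_i^*\le I$ and $\sum_i A_i^*A_i\le I$ are symmetric under $\Phi\leftrightarrow\Phi^\dag$, $A_i\leftrightarrow A_i^*$, because $\Phi^\dag$ is itself a normal completely positive map with Kraus family $\{A_i^*\}$, and the algebra $\mathcal{A}$ is invariant under this swap. Hence it suffices to show that any compact fixed point $B$ of $\Phi$ lies in $\mathcal{A}'$; applying the same statement to $\Phi^\dag$ with Kraus operators $\{A_i^*\}$ will give both the reverse inclusion and the set equality $\{B\in\mathcal{K(H)}:\Phi(B)=B\}=\{B\in\mathcal{K(H)}:\Phi^\dag(B)=B\}$, since any such $B$ lies in $\mathcal{A}'$ and therefore commutes with every $A_i$ and $A_i^*$, which forces $\Phi(B)=\Phi^\dag(B)$.

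Second, I reduce to the self-adjoint case: from $\Phi(B^*)=\Phi(B)^*=B^*$ the real and imaginary parts of a compact fixed point are again compact fixed points, so assume $B=B^*$. Applying the Kadison--Schwarz inequality, valid for $2$-positive maps with $\Phi(I)\le I$, gives $\Phi(B^2)\ge \Phi(B)^2=B^2$. The crucial step is to promote this to equality. Using compactness of $B$, let $P_N$ be the finite-rank spectral projection of $B^2$ onto the span of its top $N$ nonzero eigenvectors. Trace duality gives ${\rm tr}(P_N\Phi(B^2))={\rm tr}(\Phi^\dag(P_N)B^2)$, and $\Phi^\dag(P_N)\le \Phi^\dag(I)\le I$ then yields ${\rm tr}(P_N(\Phi(B^2)-B^2))\le 0$. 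Combined with the positivity $\Phi(B^2)-B^2\ge 0$, this forces $P_N(\Phi(B^2)-B^2)P_N=0$; letting $N\to\infty$, the compactness of $B^2$ lets us conclude that $\Phi(B^2)=B^2=\Phi(B)^*\Phi(B)$, i.e.\ equality holds in Kadison--Schwarz.

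Third, I translate this equality into commutation with each Kraus operator. Equality in Kadison--Schwarz for a completely positive map means $B$ belongs to the multiplicative domain of $\Phi$. In the Kraus picture, the identity
$$
\sum_i A_iB^2A_i^*=\Bigl(\sum_j A_j B A_j^*\Bigr)\Bigl(\sum_k A_k B A_k^*\Bigr)
$$
can be rewritten, via a Cauchy--Schwarz argument applied in $\mathcal{M}_\infty(\mathcal{B(H)})$ to the column $(A_i^*B-BA_i^*)_i$ with weights provided by the contractions $\sum A_i^*A_i\le I$ and $\sum A_iA_i^*\le I$, to yield $A_iB=BA_i$ for every $i$. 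Taking adjoints and using $B=B^*$ then gives $A_i^*B=BA_i^*$ as well, so $B\in\mathcal{A}'$.

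The main obstacle is the second stage: promoting $\Phi(B^2)\ge B^2$ to equality in an \emph{infinite-dimensional, non-unital} setting genuinely requires both conditions $\Phi(I)\le I$ and $\Phi^\dag(I)\le I$, since without the second the trace-duality estimate ${\rm tr}(\Phi^\dag(P_N)B^2)\le {\rm tr}(P_N B^2)$ fails, and without the compactness of $B$ the finite-rank approximations $P_N$ are unavailable. The last step from multiplicative-domain membership to individual commutation with each $A_i$ is also delicate, because in the non-unital regime one must carefully keep the bounded off-diagonal slack $I-\sum_i A_iA_i^*\ge 0$ from absorbing the intertwining identities.
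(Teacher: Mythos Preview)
The paper does not prove this lemma at all; it is quoted verbatim from \cite{Li11}, so there is no in-paper argument to compare against. Your plan has to stand on its own.

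Stages~1 and~3 are essentially fine. In particular, once one has the \emph{full} equality $\Phi(B^2)=B^2$ for a self-adjoint compact fixed point $B$, the commutator identity
\[
0\;\le\;\sum_i [A_i,B][A_i,B]^*\;=\;\Phi(B^2)-\Phi(B)B-B\Phi(B)+B\Phi(I)B\;=\;B\Phi(I)B-B^2\;\le\;0
\]
immediately forces $[A_i,B]=0$ for every $i$, and $[A_i^*,B]=0$ follows by taking adjoints. So the ``Cauchy--Schwarz in $\mathcal{M}_\infty$'' hand-waving in stage~3 can be replaced by this one-line computation.

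The real gap is in stage~2. First, a minor point: the inequality ${\rm tr}\bigl(\Phi^\dag(P_N)B^2\bigr)\le{\rm tr}(P_NB^2)$ does \emph{not} follow from $\Phi^\dag(P_N)\le I$ alone; you also need ${\rm tr}\bigl(\Phi^\dag(P_N)\bigr)={\rm tr}\bigl(P_N\Phi(I)\bigr)\le N$ (this uses $\Phi(I)\le I$) together with a Ky~Fan--type rearrangement, exactly as in the paper's inequality~\eqref{eq:trrho2b}. More seriously, even granting $P_N(\Phi(B^2)-B^2)P_N=0$ for every $N$, passing to the limit only yields $(\Phi(B^2)-B^2)P=0$, where $P$ is the support projection of $B$. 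Writing $Q=I-P$ for the projection onto $\ker B$, you obtain
\[
\Phi(B^2)=B^2+Q\Phi(B^2)Q,\qquad Q\Phi(B^2)Q=\sum_i QA_iB^2A_i^*Q\ge 0,
\]
and nothing in your argument forces this kernel block to vanish. Plugging into the commutator identity gives only $\sum_i[A_i,B][A_i,B]^*=Q\Phi(B^2)Q - B(I-\Phi(I))B$, from which one extracts $B(I-\Phi(I))B=0$ and $[A_i,B]=QA_iB$, but not $[A_i,B]=0$. Thus the Kadison--Schwarz route, as written, does not close when $\ker B\neq\{0\}$. You need an additional ingredient to control $Q\Phi(B^2)Q$ --- for instance, first proving $\Phi^\dag(B)=B$ by working with the finite-rank (hence trace-class) spectral projections of $B$, and then combining the two commutator identities for $\Phi$ and $\Phi^\dag$.
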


\begin{theorem}\label{thm:isom}
Let $\Phi$ be a quantum channel on ${\mathcal{B(H}})$.
Then $S(\Phi(\rho))=S(\rho)$ for all quantum states $\rho\in S({\mathcal{H}})$
if and only if there exists an isometry $V\in{\mathcal{ B(H}})$ such that
$\Phi(X)=VXV^*$ for all $X\in{\mathcal{ B(H}})$.
\end{theorem}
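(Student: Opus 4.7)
The reverse implication is immediate: if $V^*V=I$ and $\rho=\sum_n\lambda_n|e_n\rangle\langle e_n|$ is the spectral decomposition of a density operator, then $V\rho V^*=\sum_n\lambda_n|Ve_n\rangle\langle Ve_n|$ with $\{Ve_n\}$ orthonormal, so $\Phi(\rho)$ and $\rho$ share their multisets of nonzero eigenvalues, whence $S(\Phi(\rho))=S(\rho)$.

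For the forward implication, I would assume $S(\Phi(\rho))=S(\rho)$ for every $\rho\in S(\mathcal{H})$. Since $S$ vanishes exactly on pure states (Lemma \ref{lem:S-props}(i)), $\Phi$ sends pure states to pure states. Fix a Kraus representation $\Phi(X)=\sum_i A_iXA_i^*$ with $\sum_i A_i^*A_i=I$. For each unit $x\in\mathcal{H}$, the identity $\Phi(|x\rangle\langle x|)=\sum_i|A_ix\rangle\langle A_ix|$ is rank one, so the vectors $\{A_ix\}_i$ must all lie on a common complex line; hence there exist a unit vector $\phi(x)\in\mathcal{H}$ and scalars $c_i(x)$ with $A_ix=c_i(x)\phi(x)$ and $\sum_i|c_i(x)|^2=1$.

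The heart of the proof is to upgrade this pointwise proportionality to the global statement $A_i=\lambda_i V$ for scalars $\lambda_i$ and a single operator $V$. I would distinguish two cases. \emph{Case 1:} some $A_1$ has rank at least two. For $x,y\notin\ker A_1$ with $A_1x,A_1y$ linearly independent, writing $A_ix=\lambda_i(x)A_1x$ and $A_iy=\lambda_i(y)A_1y$, the single-line condition at $x+y$ gives $\lambda_i(x)A_1x+\lambda_i(y)A_1y=\lambda_i(x+y)(A_1x+A_1y)$, forcing $\lambda_i(x)=\lambda_i(y)=:\lambda_i$. Since rank $A_1\geq 2$, given any $x_1,x_2\notin\ker A_1$ one can pick a third $y$ whose $A_1$-image avoids both lines $\mathbb{C}A_1x_1$ and $\mathbb{C}A_1x_2$, so $\lambda_i(x_1)=\lambda_i(y)=\lambda_i(x_2)$ and $\lambda_i$ is a genuine constant on $\mathcal{H}\setminus\ker A_1$. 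For $z\in\ker A_1$ and any $y\notin\ker A_1$, expanding $A_i(z+y)=\lambda_i A_1(z+y)=\lambda_i A_1y=A_iy$ yields $A_iz=0$, so $A_i=\lambda_i A_1$ on all of $\mathcal{H}$. \emph{Case 2:} every $A_i$ has rank at most one. Then the single-line condition pins all ranges into a common one-dimensional subspace (any vector $x$ with $A_1x=0$ but $A_jx\neq 0$ in a direction outside that subspace would, when combined with a $y\notin\ker A_1$, break the line condition at $x+y$), giving $\Phi(X)={\rm tr}(X)P$ for some rank-one projection $P$; this collapses every state to $P$ and contradicts entropy preservation on any mixed $\rho$. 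Hence Case 1 must hold.

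To finish, set $c:=\sum_i|\lambda_i|^2$ and $V:=\sqrt{c}\,A_1$. Trace preservation reads $I=\sum_i A_i^*A_i=cA_1^*A_1$, so $V^*V=I$ and $\Phi(X)=\sum_i|\lambda_i|^2 A_1XA_1^*=VXV^*$ on all of $\mathcal{B(H)}$, as required. The main obstacle I anticipate is the linearity step in Case 1: in infinite dimensions, verifying that the bridging argument for the constancy of $\lambda_i(\cdot)$ goes through for every pair outside $\ker A_1$, that vectors of $\ker A_1$ are correctly absorbed, and that the rank-one degenerate configuration is really the only obstruction (and can be eliminated by a second, coarser appeal to entropy preservation on mixed states) all demand careful bookkeeping beyond the pure-to-pure step.
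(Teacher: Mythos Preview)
Your argument is correct and genuinely different from the paper's. Both proofs begin with the pure--to--pure step via Lemma~\ref{lem:S-props}(i), but then diverge. The paper next uses entropy on two--point mixtures to prove that $\Phi$ sends orthogonal pure states to orthogonal pure states; from this it builds, for each injective $\rho$, an isometry $V_\rho$ with $\Phi(\rho)=V_\rho\rho V_\rho^*$, and then invokes the fixed--point/commutant Lemma~\ref{lem:commut} twice to force $A_i^*A_j=\lambda_{ij}I$ for all $i,j$, finishing with a Cauchy--Schwarz equality argument on the Gram matrix $(\lambda_{ij})$. Your route bypasses all of this: you read off directly from the rank--one condition that the Kraus vectors $\{A_ix\}_i$ are collinear for every $x$, and then a short linearity/bridging argument (using $\operatorname{rank}A_1\ge 2$) makes the proportionality constants global, so that $A_i=\lambda_iA_1$; trace preservation then gives $A_1^*A_1=c^{-1}I$ immediately.

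Your approach is more elementary and self--contained: it avoids the orthogonality--preservation detour and the commutant lemma entirely, and it uses entropy on a mixed state only once (to eliminate the degenerate rank--one case). The paper's approach, in exchange, extracts more structural information along the way (orthogonality preservation, $\Phi(I)$ is a projection, $\Phi^\dag\circ\Phi=\mathrm{id}$ on states) and illustrates how the operator--algebraic machinery of Lemma~\ref{lem:commut} applies. Two small points worth tightening in your write--up: in Case~1 the subcase $z\in\ker A_1$ is in fact vacuous a posteriori (once $A_i=\lambda_iA_1$ and $\sum_iA_i^*A_i=I$ force $A_1$ to be a scalar multiple of an isometry), and in Case~2 the cleanest way to pin all ranges to one line is to note that for any two nonzero rank--one $A_i,A_j$ there exists $x$ outside both kernels (the complement of two hyperplanes), whence the line condition at $x$ forces $\operatorname{ran}A_i\parallel\operatorname{ran}A_j$.
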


\begin{proof}
Sufficiency is clear.

Necessity. For any unit vector $x$, we have $S(\Phi(x\otimes x))=S(x\otimes x)=0$, so
$\Phi(x\otimes x)$ is a rank one orthogonal projection. Thus there exists a unit vector $z$
such that $\Phi(x\otimes x)=z\otimes z$.

Let $y\bot x$ be another unit vector of ${\mathcal{H}}$.  For convenience,
denote 
$$
\Phi(x\otimes x)=x'\otimes x' \hbox { and } \Phi(y\otimes y)=y'\otimes y'.
$$ 
Setting $\rho_0=\frac12(x\otimes x+y\otimes y)$, we have
\begin{equation}\label{eq:S=1}
S\bigl(\tfrac12(x'\otimes x'+ y'\otimes y')\bigr)=S\bigl(\Phi(\rho_0)\bigr)=S(\rho_0)=1. 
\end{equation}
In the following, we shall show $ x'\perp y'$.
Let ${{\mathcal{H}}_0}\subseteq{\mathcal{H}}$ denote the two-dimensional space spanned by $x'$ and $y'$.
Then $\Phi(\rho_0)$ can be treated as an operator from ${{\mathcal{H}}_0}$ into ${{\mathcal{H}}_0}$.
Suppose $y'=\alpha x'+\sqrt{1-|\alpha|^2}x'^{\bot}$, where $0\leq|\alpha|\leq 1$ and $x'^{\bot}$ is a
unit vector in ${{\mathcal{H}}_0}$ orthogonal to $x'$.  Then we have
the following matrix forms for the operators $x'\otimes x'$ and $y'\otimes y'$, respectively:
$$ 
x'\otimes x'=\left(\begin{array}{cc}  1 & 0\\  0 &
0 \end{array}\right), \qquad\qquad y'\otimes y'=\left(\begin{array}{cc} |\alpha|^2 & \alpha \sqrt{1-|\alpha|^2}\\  
\overline{\alpha} \sqrt{1-|\alpha|^2}& 1-|\alpha|^2\end{array}\right),
$$
where $\overline{\alpha}$ is the complex conjugate of $\alpha$.
Then  
\begin{equation}\label{eq:mixture}
\frac12( x'\otimes x' +  y'\otimes y')
=\frac12\begin{pmatrix} 
1+|\alpha|^2 & \alpha \sqrt{1-|\alpha|^2}\\  
\overline{\alpha} \sqrt{1-|\alpha|^2} & 1-|\alpha|^2
\end{pmatrix}. 
\end{equation}
The characteristic polynomial of \eqref{eq:mixture} is 
\begin{equation}\label{eq:charpoly}
\lambda^2-\lambda+\frac{1-|\alpha|^2}{4}=0. 
\end{equation} 
Furthermore, Lemma \ref{lem:S-props} (ii) and equation \eqref{eq:S=1} imply that equation \eqref{eq:charpoly} 
has two equal roots 
$\lambda_1=\lambda_2=\frac{1}{2}$, which yields $\alpha=0$. Thus, $ x' \bot y'$ as required. 

From the proof above, we know that the map $\Phi$ sends orthogonal pure states to orthogonal pure states.
Let $\{e_i\}_{i=1}^{\infty}$ be an orthonormal basis of ${\mathcal{H}}$ and $P_n$ the orthogonal projection onto 
the subspace spanned by $\{e_i\}_{i=1}^{n}$. Then 
$$
  \Phi(I)=\lim_{n\to\infty}\Phi(P_n)=\lim_{n\to\infty}\sum_{i=1}^n\Phi(e_i\otimes e_i),
$$ 
so $\Phi(I)$ is an infinite dimensional orthogonal projection.

Let $\rho\in S({\mathcal{H}})$ be injective and $\rho=\sum_{i=1}^\infty \lambda_ix_i\otimes x_i$, where
$\{x_i\}_{i=1}^{\infty}$ is an orthonormal basis and $\lambda_i$ are the eigenvalues of $\rho$.
Then 
$$
\Phi(\rho)=
\lim_{n\to\infty}\Phi\biggl(\sum_{i=1}^n \lambda_ix_i\otimes x_i\biggr)=
\lim_{n\to\infty}\sum_{i=1}^n\lambda_i\Phi(x_i\otimes x_i)=\sum_{i=1}^\infty \lambda_ix_i'\otimes x_i'.
$$
Denote $V_\rho x_i=x_i'$ for all $i$, then $\Phi(\rho)=V_\rho\rho V_\rho^*$.

By Kraus's theorem, we have
$\Phi(\rho)=\sum_{i=1}^{\infty}A_{i}\rho A_{i}^*$, where $A_i\in{\mathcal{ B(H}})$, and $\sum_{i=1}^{\infty}A_{i}^*A_{i}=I$,
as $\Phi$ is trace-preserving.  Then
$$
\sum_{i=1}^{\infty}A_{i}\rho A_{i}^*=V_\rho\rho V_\rho^*,
$$ 
which yields
\begin{equation}\label{eq:V-sum}
\sum_{i=1}^{\infty}V_\rho^*A_{i}\rho A_{i}^*V_\rho=\rho, 
\end{equation} 
for all injective $\rho\in S({\mathcal{H}})$.
Furthermore, $\sum_{i=1}^{\infty} A_{i}A_{i}^*=\Phi(I)\leq I$, so 
\begin{equation*}\sum_{i=1}^{\infty}V_\rho^*A_{i}A_{i}^*V_\rho\leq I \hbox { and }
\sum_{i=1}^{\infty}A_{i}^*V_\rho V_\rho^*A_{i}\leq\sum_{i=1}^{\infty}A_{i}^*A_{i}= I. 
\end{equation*}
Then Lemma \ref{lem:commut} and equation \eqref{eq:V-sum} imply 
$$
\rho\sum_{i=1}^{\infty}V_\rho^*A_{i} A_{i}^*V_\rho=\sum_{i=1}^{\infty}V_\rho^*A_{i}\rho A_{i}^*V_\rho=\rho,
$$ 
and  
\begin{equation}\label{eq:V-sum2}
\rho\sum_{i=1}^{\infty}A_{i}^*V_\rho V_\rho^*A_{i} =\sum_{i=1}^{\infty} A_{i}^*V_\rho\rho V_\rho^*A_{i} =\rho, 
\end{equation} 
so
\begin{equation*} 
I=\sum_{i=1}^{\infty} A_{i}^*V_\rho V_\rho^*A_{i},
\end{equation*}
which yields 
\begin{equation*}
\sum_{i=1}^{\infty} A_{i}^*V_\rho V_\rho^*A_{i}=I=\sum_{i=1}^{\infty}A_{i}^*A_{i}.
\end{equation*}  
Thus
$$
\sum_{i=1}^{\infty}A_{i}^*(I-V_\rho V_\rho^*)A_{i}=0,
$$ 
which  implies $A_{i}^*(I-V_\rho V_\rho^*)=0$, that is 
\begin{equation}\label{eq:AVV}
A_{i}^*=A_{i}^*V_\rho V_\rho^*, \hbox { }\hbox { }\hbox { } \hbox { for all injective }\rho\in S({\mathcal{H}}).
\end{equation}
Using equations \eqref{eq:V-sum} and \eqref{eq:V-sum2}, we get
\begin{equation*}
\sum_{i}A_{i}^* V_\rho (\sum_{j} V_\rho^*A_{j}\rho A_{j}^* V_\rho) V_\rho^*A_i=\rho, 
\end{equation*}  
so equation \eqref{eq:AVV} implies
\begin{equation}\label{eq:AArho}
\sum_{i,j}A_{i}^*A_{j}\rho A_{j}^*A_i=\rho, 
\end{equation}
which is equivalent to $\Phi^{\dagger}\circ \Phi (\rho)=\rho$, for all injective $\rho\in S({\mathcal{H}})$.

For any unit vector $x$, we claim that there exists a sequence of injective states $\rho_n\in S({\mathcal{H}})$ such that
$\rho_n\to x\otimes x$ as $n\to\infty$ (in the weak-$*$ topology). Indeed, let $\{f_i\}_{i=1}^\infty$ be an orthonormal basis of the orthogonal complement subspace of $x$, and
$\rho_n=(1-\frac{1}{n})x\otimes x+\frac{1}{n}\sum_{i=1}^{\infty}2^{-i}f_i\otimes f_i$.
Then $\lim_{n\to\infty}{\rm tr}(|\rho_n-x\otimes x|)=0$, so 
$\lim_{n\to\infty}{\rm tr}[(\rho_n-x\otimes x)X]=0$, for all $X\in {\mathcal{B(H}})$.
Thus $$x\otimes x=\lim_{n\to\infty}\rho_n=\lim_{n\to\infty}\Phi^{\dagger}\circ \Phi (\rho_n)=\Phi^{\dagger}\circ \Phi(x\otimes x),$$
which says that 
$$
\sum_{i,j}A_{i}^*A_{j}(x\otimes x)A_{j}^*A_i=\Phi^{\dagger}\circ \Phi (x\otimes x)=x\otimes x
$$ 
for all rank one projections $x\otimes x$.
Then by Lemma \ref{lem:commut} again, we have $A_{i}^*A_{j}(x\otimes x)=(x\otimes x)A_{i}^*A_{j}$, 
which implies that for all $i,j=1,2\cdots$,
$A_{j}^*A_{i}=\lambda_{ji}I$ and
$A_{i}^*A_{i}=\lambda_{ii}I$, so $\lambda_{ii}>0$ and
$\sum_{i}^{\infty}\lambda_{ii}=1$. It is clear that
$$
 \begin{array}{rcl}
 \left(\begin{array}{c} A_1^*\\A_2^*\\\vdots\\A_n^*\\\vdots\end{array}\right)
 \left(\begin{array}{ccccc} A_1&A_2& \cdots&A_n & \cdots\end{array}\right) &=&\left
  (\begin{array}{ccccc}
   A_{1}^*A_{1}\ \ & \ \ A_{1}^*A_{2}& \ \ \ldots &\ \ A_{1}^*A_{n} & \ \ \ldots \\
    A_{2}^*A_{1}\ \ & \ \ A_{2}^*A_{2}& \ \ \ldots &\ \ A_{2}^*A_{n}& \ \ \ldots \\
    \ldots\ \ &\ \ \ldots&\ \ \ldots &\ \ \ldots  \\
   A_{n}^*A_{1}\ \ & \ \ A_{n}^*A_{2}& \ \ \ldots &\ \ A_{n}^*A_{n}& \ \ \ldots  \\
    \ldots\ \ &\ \ \ldots&\ \ \ldots &\ \ \ldots& \ \ \ldots  \\
  \end{array}\right)\\&=&\left
  (\begin{array}{ccccc}
   \lambda_{11}I\ \ & \ \ \lambda_{12}I& \ \ \ldots &\ \ \lambda_{1n}I & \ \ \ldots \\
    \lambda_{21}I\ \ & \ \ \lambda_{22}I& \ \ \ldots &\ \ \lambda_{2n}I & \ \ \ldots\\
    \ldots\ \ &\ \ \ldots&\ \ \ldots &\ \ \ldots& \ \ \ldots  \\
   \lambda_{n1}I\ \ & \ \ \lambda_{n2}I& \ \ \ldots &\ \ \lambda_{nn}I & \ \ \ldots\\
    \ldots\ \ &\ \ \ldots&\ \ \ldots &\ \ \ldots & \ \ \ldots \\
  \end{array}\right).\end{array}
$$
Denote
\begin{eqnarray*}
M=\left
  (\begin{array}{ccccc}
   \lambda_{11}\ \ & \ \ \lambda_{12}& \ \ \ldots &\ \ \lambda_{1n} & \ \ \ldots \\
    \lambda_{21}\ \ & \ \ \lambda_{22}& \ \ \ldots &\ \ \lambda_{2n}& \ \ \ldots \\
    \ldots\ \ &\ \ \ldots&\ \ \ldots &\ \ \ldots & \ \ \ldots \\
   \lambda_{n1}\ \ & \ \ \lambda_{n2}& \ \ \ldots &\ \ \lambda_{nn} \\ \ldots\ \ &\ \ \ldots&\ \ \ldots &\ \ \ldots & \ \ \ldots \\
  \end{array}\right),
\end{eqnarray*}
then $\lambda_{ji}=\overline{\lambda_{ij}}$ and 
\begin{equation}\label{eq:lambda}
|\lambda_{ij}|^{2}\leq\lambda_{ii}\lambda_{jj}, \text{  for }
1\leq i,j
\end{equation}  
since $M\geq0$.  Further equation \eqref{eq:AArho} implies $\sum_{i,j}A_{j}^*A_{i}
A_{i}^*A_{j}=I$, which yields
$$
\sum_{i,j}|\lambda_{ij}|^{2}=1=\biggl(\sum_{i=1}^{\infty}\lambda_{ii}\biggr)^{2}.
$$  
Then by a direct calculation, we get
$$
\sum_{i\neq j}\mid \lambda_{ij}\mid^{2}=\sum_{i\neq j}\lambda_{ii} \lambda_{jj},
$$
so equation \eqref{eq:lambda} implies 
$$
| \lambda_{ij} |^{2}=\lambda_{ii}\lambda_{jj}, \text{ for } 1\leq i,j\leq  \infty.
$$ 
Thus
$$
  A_{j}^{*}A_{1}=\lambda_{j1}I=\sqrt{\lambda_{11}\lambda_{jj}}\,e^{i\theta_j}I,
$$ 
for $j=1,2\cdots$. We denote $V_j={A_j}/{\sqrt{\lambda_{jj}}}$, so $V_j$ are isometric operators for
$j=1,2\cdots$,  which yields $V_j^*V_1=e^{i\theta_j}I$. Then
$V_j^*V_1V_1^*=e^{i\theta_j}V_1^*$ and $V_jV_j^*V_1=e^{i\theta_j}V_j$. Hence
$$
V_1V_1^*V_j V_j^*V_1V_1^*=V_1V_1^* \text{ and } V_jV_j^*V_1V_1^*V_jV_j^*=V_jV_j^*,
$$
so we get $$V_1V_1^*(I-V_j V_j^*)=0 \hbox{ and } V_jV_j^*(I-V_1V_1^*)=0,$$ then
$V_1V_1^*=V_jV_j^*$, for $j=1,2\cdots$. Thus for all $X\in {\mathcal{B(H}})$,
\begin{align*}
A_{j}X A_{j}^{*}&=V_1V_1^*A_{j}X A_{j}^{*}V_1V_1^*
=V_1\frac1{\lambda_{11}}\,A_1^*A_jXA_j^*A_1\,V_1^*
=\lambda_{jj}V_1X V_1^*,
\end{align*}
which implies
$$
\Phi(X)=A_{1}XA_{1}^{*}+A_{2}X
A_{2}^{*}+\ldots+A_{n}X A_{n}^{ *}+\cdots=V_1X V_1^*.
$$
\end{proof}

The following result is an immediate consequence of
Theorem \ref{thm:isom}. 

\begin{corollary}
Let  $\Phi$ be a bi-stochastic quantum operation. Then
 $S(\Phi(\rho))=S(\rho)$ for all quantum state $\rho\in S(\mathcal{H})$
if and only if there exists a unitary matrix $U$ such that
$\Phi(\rho)=U\rho U^*$.
\end{corollary}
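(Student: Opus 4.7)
The plan is to derive this corollary as a direct consequence of Theorem \ref{thm:isom}, using only one extra observation: that the defining unitality condition of a bi-stochastic operation promotes the isometry produced by Theorem \ref{thm:isom} to a unitary.

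For sufficiency, if $\Phi(\rho)=U\rho U^*$ for some unitary $U$, then $U\rho U^*$ has exactly the same spectrum (with multiplicities) as $\rho$, so $S(\Phi(\rho))=-\operatorname{tr}(U\rho U^*\log(U\rho U^*))=-\operatorname{tr}(\rho\log\rho)=S(\rho)$, which is immediate and requires no argument beyond unitary invariance of the functional calculus.

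For necessity, I would first observe that a bi-stochastic quantum operation is, in particular, a quantum channel (since trace-preservation is part of the definition of bi-stochastic). Therefore the hypothesis $S(\Phi(\rho))=S(\rho)$ for all $\rho\in S(\mathcal{H})$ allows me to apply Theorem \ref{thm:isom} directly, yielding an isometry $V\in\mathcal{B(H)}$ with $V^*V=I$ and $\Phi(X)=VXV^*$ for every $X\in\mathcal{B(H)}$.

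The one remaining step is to upgrade $V$ from isometry to unitary. Here I invoke the unital property: since $\Phi$ is bi-stochastic, $\Phi(I)=I$. But $\Phi(I)=VIV^*=VV^*$, so $VV^*=I$. Together with $V^*V=I$ from the isometry property, this shows $V$ is unitary, and setting $U=V$ completes the proof. There is no real obstacle here; the entire content of the corollary is packed into Theorem \ref{thm:isom}, and the corollary essentially records that the extra unitality hypothesis reflects itself in the conclusion by closing the gap between isometry and unitary.
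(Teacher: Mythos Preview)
Your proposal is correct and matches the paper's own treatment: the paper states only that the corollary is an immediate consequence of Theorem~\ref{thm:isom} and gives no further argument, and your observation that unitality forces $VV^*=\Phi(I)=I$ is exactly the missing line.
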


\section*{Acknowledgements}
This work was carried out during Y.L.'s one-year visit to the University of York.
The authors would like to thank an anonymous referees for corrections to the first submitted manuscript version of this paper.
This work is supported by the National Science Foundation of China (Grant No. 10871224, 11001159) and the Fundamental 
Research Funds for the Central Universities (GK201301007), China.

\end{document}